\documentclass[12pt]{article}
\usepackage{amsmath,amssymb,amsthm}
\usepackage{graphicx}
\usepackage{empheq}
\usepackage{tikz}
\usetikzlibrary{arrows.meta}

\usepackage[hmargin=1in,vmargin=1in]{geometry}

\theoremstyle{plain}
\newtheorem{theorem}{Theorem}
\newtheorem{proposition}{Proposition}
\newtheorem{lemma}{Lemma}
\newtheorem{definition}{Definition}

\newtheoremstyle{rmk}{\topsep}{\topsep}{\upshape}{}{\bfseries}{.}{ }{}
\theoremstyle{rmk}
\newtheorem{remark}{Remark}

\newcommand{\xh}{\hat{x}}
\newcommand{\yh}{\hat{y}}
\newcommand{\xib}{\bar{\xi}}
\newcommand{\etab}{\bar{\eta}}

\title{\textbf{The Heavy Chain PDE:\\ Rapid Stabilization by Backstepping}}
\author{Miroslav Krstic\thanks{Department of Mechanical and Aerospace Engineering, University of California, San Diego, La Jolla, CA 92093-0411, USA. Email: \texttt{krstic@ucsd.edu}.}}
\date{}

\begin{document}
\maketitle

\begin{abstract}
We consider boundary stabilization of a heavy chain hanging from a moving trolley with no tip load. Because the tension vanishes at the free end, the wave speed vanishes there; in Riemann coordinates the model becomes a degenerate $2\times2$ hyperbolic system in which the coupling is singular and the free-end reflection is generated in the domain rather than by a boundary condition. We construct a Volterra backstepping transformation that maps this system to the same chain with uniform damping of an arbitrarily prescribed rate and an elastic restraint at the trolley, yielding exponential convergence of displacement, velocity, and strain to zero. The singular kernel equations are solved by selecting their bounded Frobenius branch at the free end, which replaces the missing boundary datum, and the four kernels are generated by a globally convergent power series. The transformation is boundedly invertible on the energy space, with inverse obtained by reversing the prescribed decay rate. The result extends the radial backstepping structure developed for parabolic equations on disks and balls to a degenerate hyperbolic system.
\end{abstract}

\newpage

\tableofcontents
\bigskip
\section{Introduction}

\subsection{Problem and literature}

A chain of length one hangs from a trolley and carries no load. Figure \ref{fig:chain} shows one swinging. Under the small-angle approximation, with gravity and length normalized, the horizontal deviation $u(x,t)$ obeys
\begin{equation}\label{eq:plant}
u_{tt}(x,t)=\bigl(x\,u_x(x,t)\bigr)_x, \qquad x\in(0,1),\ t>0,
\end{equation}
where $x$ is measured from the free end, and the actuation is the horizontal force applied at the trolley,
\begin{equation}\label{eq:input}
u_x(1,t)=F(t),
\end{equation}
the tension being normalized to one there. The collocated measurement is the velocity $u_t(1,t)$ of the actuated end, and every boundary law below is written as the force in terms of that velocity and of the distributed state.
At the free end the tension vanishes,
\begin{equation}\label{eq:tip}
\lim_{x\to0^+}x\,u_x(x,t)=0.
\end{equation}
\begin{figure}[t]
\centering
\includegraphics[width=0.5\textwidth]{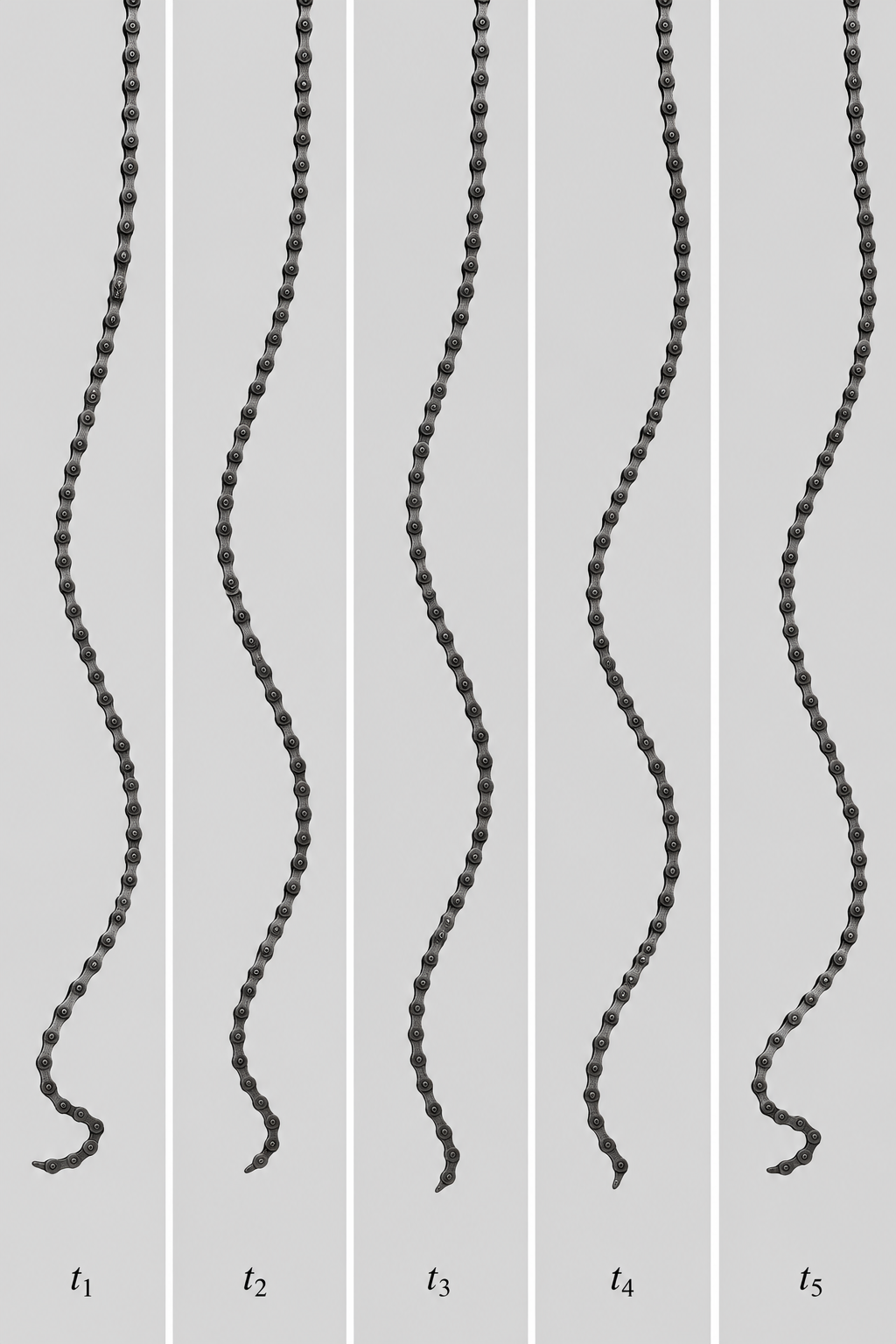}
\caption{Snapshots $t_1,\dots,t_5$ from an experiment with a bicycle chain hanging from a support. The oscillation takes the Bessel-function shapes of \eqref{eq:plant}: the wavelength is longest at the top, where the tension is largest, and shortens toward the tip, where the oscillations crowd together. The tension is the weight of the chain below and is therefore nonuniform along the chain, approaching zero at the free end, where the last links swing through the largest excursions.}
\label{fig:chain}
\end{figure}

The propagation speed is $\sqrt x$ and the transit time is $\int_0^1dx/\sqrt x=2$. In the Riemann invariants of Section~\ref{sec:riemann} the chain is a $2\times2$ hyperbolic system with unit speeds, and it is a degenerate member of that class: the speeds of \eqref{eq:plant} vanish at the uncontrolled end, and the degeneracy shows itself as a coupling between the two invariants that is singular there and a reflection carried by that coupling instead of by a boundary condition. The chain is the benchmark for this situation, and the difficulties met below belong to the class rather than to it.

The chain has been studied from the motion planning side. Murray \cite{Murray} treated it as a finite chain of pendulums and found it flat in the sense of \cite{FLMR}; Petit and Rouchon \cite{PetitRouchon} carried that to the infinite-dimensional model, parameterized the trajectories by those of the free end through distributed advances and delays, and showed that a transfer between rest states requires a time larger than twice the transit time.

A payload changes the problem. With a point mass at the free end the tension is bounded below by its weight, the model is uniformly hyperbolic, and boundary feedback obtained by backstepping through the dynamics of the cart gives four collocated gains under which the closed loop is exponentially stable, as St\"urzer, Arnold and Kugi \cite{Sturzer} establish. The load is what the normal-form designs surveyed in \cite{Gehring} take the hanging rope to have as well. The chain considered here carries none, and it is the vanishing of the tension at the free end that puts it outside that setting.

On the feedback side the work to measure against is that of Alabau-Boussouira, Cannarsa and Leugering \cite{ACL}, which established the control theory of the degenerate wave equations $u_{tt}=(x^{\mu}u_x)_x$, the chain being $\mu=1$. They established observability for $\mu<2$, and its failure beyond, exact controllability by the Hilbert uniqueness method, and exponential decay under boundary damping, for linear and for nonlinear feedbacks, with optimal decay rates; and they showed that the degeneracy does not degrade those rates at large time. Theirs is the feedback design for this plant, and the difficulty of the problem is measured by the fact that it stands alone: the line it opened has since gone to nondivergence form with drift \cite{BFM}, to degenerate beams, and alongside the degenerate parabolic literature \cite{CMV}, without a second construction for the chain itself. What their damper gives, and all that any single boundary damper can give, is decay at the rate the plant supplies.

Assigning the rate is a different matter. One route exists in principle: for a conservative system with an admissible boundary operator, exact controllability and a weighted Gramian yield a feedback of any prescribed rate \cite{Komornik,Urquiza,Vest}. It has not been carried out for the chain, which would require admissibility and exact controllability in the energy space to be established first, and it produces a feedback defined by inverting an operator assembled from trajectories, with no formula for its gains and no statement about what the controller does to the plant.

For the ordinary string the placement of that rate has an antecedent. With an anti-damper at the uncontrolled end all eigenvalues lie on a vertical line whose position is fixed by the reflection there, and Smyshlyaev and Krstic \cite{SmyshlyaevKrstic} move that line to any position the designer chooses, by a transformation acting on the velocity and the strain rather than on the displacement, and by a law that is a mismatched impedance at the actuated end together with the spatial average of the velocity. Their kernels are constants, the string being uniform. The chain is where that architecture meets a plant whose tension vanishes, and the kernels can no longer be constants.

Accounts of this kind, in which the closed loop is exhibited rather than estimated, are what backstepping supplies for $2\times2$ hyperbolic systems, where a Volterra transformation carries the plant to a target whose decay is evident \cite{KrsticSmyshlyaev,VazquezKrsticCoron,CoronVazquezKrsticBastin}, and the same architecture extends to one and to many counterconvecting states \cite{DiMeglioVazquezKrstic,HuDiMeglioVazquezKrstic}; \cite{Survey} surveys the field. The heavy chain is not in it, and the difference is not one of degree. In the Riemann variables that carry its energy, both invariants vanish at $x=0$; no boundary condition can be prescribed there; the reflection of a descending wave into an ascending one is performed by an in-domain coupling with a non-integrable singularity; and by Remark \ref{rem:invariant} no change of Riemann variables removes that singularity, so those designs do not apply after any change of coordinates. Singular coefficients of this kind have been met before in backstepping, and the body of work that met them is parabolic. Vazquez and Krstic began with the reaction-diffusion equation on a disk \cite{VK14}, extended it to balls of arbitrary dimension and to output feedback \cite{VK15,VK16balls}, treated the singular equation on the disk \cite{VazquezKrsticDisk} and the sphere under revolution symmetry \cite{VazquezKrsticSphere}, and with Zhang and Qi settled the well-posedness of the kernels and their computation by power series for radially dependent coefficients \cite{VZQK23}; the domain-extension and spatial-invariance methods that accompany this line are surveyed in \cite{Vazquez25}. In every one of these the radius makes the plant singular and the same singularity reappears in the kernel equations, and the decade of work required to dispose of it in the parabolic case is the measure of what the difficulty is worth. There the singularity is one of coordinates, the operator being uniformly elliptic and the radius singular only at the centre; the chain is degenerate as it stands, its speed vanishing at the free end. The two are nevertheless the same object: in the transit-time coordinate the chain is the radially symmetric wave equation on a disk of radius $2$, with the free end at its centre, as Remark \ref{rem:hardy} records. What that line met at the centre of a disk is what the degeneracy of the chain produces, and the hyperbolic counterpart of it is carried out here.

A degenerate plant has been stabilized rapidly once before, in the parabolic case: for $u_t=(x^\alpha u_x)_x$ with $\alpha\in(0,1)$, controlled at the end where the coefficient does not vanish, Gagnon, Lissy and Marx \cite{GagnonLissyMarx} assign any decay rate, with a target that is the same degenerate operator with $-\lambda$ added. The target here is of that kind. Their transformation is Fredholm, and establishing that it is continuous and invertible in the natural energy space is the work of that paper; the design below stays within the Volterra class, with an exponential multiplier.

For hyperbolic systems the transformation need not be of Volterra type: Coron, Hu and Olive \cite{CoronHuOlive} reach the optimal finite time for linear balance laws by a Volterra transformation followed by a Fredholm one, and the construction extends to coefficients depending on time and space \cite{CoronHuOliveShang}. Their in-domain coupling is bounded and their reflection is a boundary coefficient, neither of which the chain provides; the design given here stays inside the Volterra class, with an exponential multiplier.

\subsection{Idea of our design}\label{sec:idea}

The design sought here is stated in the physical variables. We seek a Volterra transformation of the state, together with the boundary law it induces, carrying \eqref{eq:plant}--\eqref{eq:tip} into the same chain with damping distributed along its length,
\begin{eqnarray}
\psi_{tt}+2\lambda\psi_t+\lambda^2\psi&=&\bigl(x\,\psi_x\bigr)_x,\qquad x\in(0,1),\label{eq:goal}\\
\lim_{x\to0^+}x\,\psi_x&=&0,\qquad \psi_x(1,t)\ =\ -c_0\,\psi(1,t),\label{eq:goalbc}
\end{eqnarray}
$\psi$ denoting the transformed displacement, $\lambda>0$ the decay rate and $c_0>0$ the stiffness, both at the designer's choice. Every material point of \eqref{eq:goal} acquires viscous damping $2\lambda$ and the stiffness correction $\lambda^2$; the tension condition at the free end is untouched, and the support is elastically restrained rather than held. The restraint is what leaves no equilibrium but rest: the static solutions of \eqref{eq:goal} are $c\,I_0(2\lambda\sqrt x)$, and \eqref{eq:goalbc} forces $c\bigl[\lambda I_1(2\lambda)+c_0I_0(2\lambda)\bigr]=0$, hence $c=0$. Without it the chain would straighten but keep a rigid horizontal offset, which carries no energy and which no feedback in $u_t$ and $u_x$ alone can remove. The chain is still a chain, hanging under its own weight with a free end: the physics is changed and not blurred.

On this chain the delay from a material point to the trolley is $2(1-\sqrt x)$, not proportional to the distance: the delay per unit length grows without bound toward the free end, so equal lengths of chain occupy unequal shares of the actuator's future. Asking for one rate at every height against such a schedule leaves no freedom in the gains: backstepping returns them graded along the length, the grading dictated by the plant.

\subsection{Contributions}

\begin{enumerate}
\item \emph{Backstepping for plants made singular by a radial coordinate is advanced from parabolic to hyperbolic PDEs.} On disks and balls the method has been established for reaction-diffusion equations \cite{VK14,VK16balls,VazquezKrsticDisk,VazquezKrsticSphere,VZQK23}. For wave equations the same singularity has been an obstacle. The heavy chain is its simplest instance: in the transit-time coordinate the chain is the wave equation on a disk, with the free end at the centre.

\item \emph{Backstepping is extended to a $2\times2$ hyperbolic system that is degenerate.} Existing designs require bounded coupling coefficients and a reflection coefficient at the uncontrolled boundary \cite{VazquezKrsticCoron,CoronVazquezKrsticBastin,DiMeglioVazquezKrstic,HuDiMeglioVazquezKrstic}. The chain has neither: both speeds vanish at the free end, the coupling is unbounded there, and the reflection is produced by that coupling. No change of Riemann variables restores either property. The design that succeeds does not remove the singular term; it retains the conversion of one wave into the other at the free end, adds the damping around it, and restrains the support, so that the two transports and the boundary state are strictly passive in the same pair of ports, for every rate and every stiffness.

\item \emph{The kernel equations, singular at the free end and short of a boundary condition, are solved.} There the four equations admit no boundary datum. Their solutions are of two types, one growing as the inverse square root of the distance to that end, one vanishing as its square root, and boundedness selects the second: that requirement replaces the missing condition. The kernels then follow from a series in the decay rate, convergent for every rate, with polynomial coefficients. Series of this type resolved the radial parabolic kernels \cite{VZQK23}; the problem they resolve here is not that one.

\item \emph{The decay rate of the chain is specified by the designer, and the closed loop is a chain with that rate.} A boundary damper dissipates what reaches it, at a rate the plant determines \cite{ACL}. We construct instead a full-state feedback whose gain profiles are the kernels of an invertible Volterra transformation, and with it a similarity between the closed loop and a chain that hangs under its own weight, reflects at its free end as before, and is damped along its length at any rate specified. The inverse transformation is the same one with the rate reversed. Relocating the decay rate of a wave equation by a transformation of this kind was done for the uniform string in \cite{SmyshlyaevKrstic}, with constant kernels; the chain admits no such collapse.
\end{enumerate}

\section{Energy and Riemann variables}\label{sec:energy}

\begin{table}[t]
\centering
\renewcommand{\arraystretch}{1.45}
\begin{tabular}{@{}l l l@{}}
\hline
\noalign{\vskip 3pt}
Pair & Definition & Representation\\
\noalign{\vskip 3pt}
\hline
\noalign{\vskip 5pt}
$\bigl(\sqrt x\,u_x,\ u_t\bigr)$ & \eqref{eq:state} & plant state\\
$(\tau,v)$ & $\tau=x\,u_x$, $v=u_t$, \eqref{eq:physical} & first-order physical form\\
$(y,v)$ & $y=\tau/\sqrt x$, \eqref{eq:zyscaled} & force scaled, equal speeds\\
$(\xi,\eta)$ & $(y\pm v)/2$, \eqref{eq:sysxieta} & invariants, diagonal convection\\
$(\xib,\etab)$ & $\sqrt{\xh}$ times $(\xi,\eta)$, \eqref{eq:riemanndef} & plant in Riemann form \eqref{eq:sys}\\
$(\alpha,\beta,\sigma)$ & $\mathcal T_\lambda(\xib,\etab)$, $(\alpha+\beta)(2,t)$ & target \eqref{eq:tgt}, \eqref{eq:tgtbcode}\\
$(\xib_d,\etab_d)$ & \eqref{eq:riemanndisp} & Riemann form of $(w,u)$\\
$(\alpha_d,\beta_d)$ & $\mathcal T_\lambda(\xib_d,\etab_d)$, \eqref{eq:transfdisp} & antiderivatives of $(\alpha,\beta)$ in time\\
$(w,u)$ & $w=p/\sqrt x$, \eqref{eq:impulse}--\eqref{eq:wdef} & displacement realization of the plant\\
$(\omega,\psi)$ & \eqref{eq:psiomegadef} & displacement realization of the target\\
\noalign{\vskip 5pt}
\hline
\end{tabular}
\caption{Representations of the chain. The rows above the last two carry velocity and strain, in the plant and in the target; the last two carry displacement, and $(w,u)$ stands to $(\omega,\psi)$ as $(\xib,\etab)$ stands to $(\alpha,\beta)$. The target is $(\alpha,\beta,\sigma)$ together with $(\alpha_d,\beta_d)$ and $(\omega,\psi)$; everything else is the plant.}
\label{tab:vars}
\end{table}

\label{sec:riemann}

The state of \eqref{eq:plant} is the pair of scaled strain and velocity,
\begin{equation}\label{eq:state}
Z(t)=\Bigl(\sqrt x\,u_x(\cdot,t),\ u_t(\cdot,t)\Bigr),
\end{equation}
and the mechanical energy is
\begin{equation}\label{eq:energy}
E(t)=\frac12\int_0^1\Bigl(x\,u_x^2(x,t)+u_t^2(x,t)\Bigr)dx .
\end{equation}
On \eqref{eq:state} alone the displacement is invisible: a rigid translation $u\equiv c$ has neither velocity nor strain, is an equilibrium of \eqref{eq:plant} with $F=0$, and gives $E=0$. Since the aim \eqref{eq:goal}--\eqref{eq:goalbc} is to drive $u$ itself to rest, the norm carries the trolley position as well,
\begin{equation}\label{eq:normc0}
\bigl\|(u,u_t)\bigr\|^2=\int_0^1\Bigl(x\,u_x^2+u_t^2\Bigr)dx+c_0\,u^2(1,t),
\end{equation}
which is a norm and not a seminorm: $\int xu_x^2=0$ with $u(1)=0$ forces $u\equiv0$.

Throughout, $x$ is traded for the transit-time coordinate
\begin{equation}\label{eq:xhdef}
\xh=2\sqrt x\in(0,2),
\end{equation}
in which the chain travels at unit speed. In it,

\begin{equation}\label{eq:supnorm}
\|Z\|_{\mathcal X}=\sup_{\xh\in(0,2)}\Bigl(|u_{\xh}(\xh)|+|u_t(\xh)|\Bigr)
\end{equation}
is the supremum norm of the gradient of a radially symmetric field on a disk of radius $2$, the chain in that coordinate being the axisymmetric wave equation there. Since the domain is bounded, $2E\le\|Z\|_{\mathcal X}^2$.

\begin{definition}\label{def:class}
$\mathcal X_0$ is the set of states \eqref{eq:state} with $\|Z\|_{\mathcal X}<\infty$, $u_t$ and $u_{\xh}$ continuous on $[0,2]$, and
\begin{equation}\label{eq:class}
u_{\xh}(0)=0,\qquad\text{equivalently}\qquad \lim_{x\to0^+}\sqrt x\,u_x(x)=0 .
\end{equation}
\end{definition}

\begin{remark}\label{rem:focusing}
A state with $u_{\xh}(0)\ne0$ is a cone at the centre of the disk. On such states the norm \eqref{eq:supnorm} is amplified by focusing, and no estimate $\|Z(t)\|_{\mathcal X}\le c\|Z(0)\|_{\mathcal X}$ holds, for the open loop or for any feedback.
\end{remark}

\begin{lemma}[Riemann variables and energy]\label{lem:riemann}
Under $\xh=2\sqrt x$ and
\begin{empheq}[box=\fbox]{equation}\label{eq:riemanndef}
\xib=\frac{x^{1/4}}{\sqrt2}\Bigl(\sqrt x\,u_x+u_t\Bigr),\qquad
\etab=\frac{x^{1/4}}{\sqrt2}\Bigl(\sqrt x\,u_x-u_t\Bigr),
\end{empheq}
the heavy chain \eqref{eq:plant}--\eqref{eq:tip} reads
\begin{subequations}\label{eq:sys}
\begin{align}
\xib_t&=\xib_{\xh}+\frac{1}{2\xh}\,\etab,\label{eq:sysxi}\\
\etab_t&=-\etab_{\xh}-\frac{1}{2\xh}\,\xib,\label{eq:syseta}
\end{align}
\end{subequations}
the map \eqref{eq:riemanndef} is an isometry,
\begin{equation}\label{eq:isometry}
\int_0^2\bigl(\xib^2+\etab^2\bigr)d\xh=\int_0^1\bigl(x\,u_x^2+u_t^2\bigr)dx=2E,
\end{equation}
and every state in $\mathcal X_0$ satisfies
\begin{equation}\label{eq:tipbehavior}
\xib(\xh)=\frac{\sqrt{\xh}}{2}\,u_t(0)+o\bigl(\sqrt{\xh}\bigr),
\qquad
\etab(\xh)=-\frac{\sqrt{\xh}}{2}\,u_t(0)+o\bigl(\sqrt{\xh}\bigr).
\end{equation}
In particular both invariants vanish at $\xh=0$, where no boundary condition is imposed, and the energy flux $\xib^2-\etab^2$ vanishes there.
\end{lemma}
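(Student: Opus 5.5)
The plan is a direct change of variables carried out in three stages: first rewrite the plant in the transit-time coordinate, then diagonalize with the unscaled invariants, and finally apply the $\sqrt{\xh}$ weight. The isometry and the tip behaviour then drop out of the explicit formula for $(\xib,\etab)$ in terms of $(u_{\xh},u_t)$.

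\emph{Coordinate change.} With $x=\xh^2/4$ we have $dx=\tfrac{\xh}{2}\,d\xh$ and $\partial_x=\tfrac{2}{\xh}\partial_{\xh}$, so $\sqrt x\,u_x=u_{\xh}$. Hence $x^{1/4}/\sqrt2=\sqrt{\xh}/2$, and \eqref{eq:riemanndef} becomes
\begin{equation*}
\xib=\tfrac{\sqrt{\xh}}{2}\bigl(u_{\xh}+u_t\bigr),\qquad \etab=\tfrac{\sqrt{\xh}}{2}\bigl(u_{\xh}-u_t\bigr).
\end{equation*}
The same computation gives $(x u_x)_x=\tfrac1{\xh}(\xh u_{\xh})_{\xh}$, so \eqref{eq:plant} is the axisymmetric wave equation $u_{tt}=u_{\xh\xh}+u_{\xh}/\xh$.

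\emph{First-order system and diagonalization.} Set $y=u_{\xh}$ and $v=u_t$. Then $y_t=v_{\xh}$ and $v_t=y_{\xh}+y/\xh$. The unscaled invariants $\xi=(y+v)/2$ and $\eta=(y-v)/2$ satisfy
\begin{equation*}
\xi_t=\xi_{\xh}+\tfrac{\xi+\eta}{2\xh},\qquad \eta_t=-\eta_{\xh}-\tfrac{\xi+\eta}{2\xh}.
\end{equation*}

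\emph{Weighting.} Multiply by $\sqrt{\xh}$. Using $\sqrt{\xh}\,\xi_{\xh}=\xib_{\xh}-\xib/(2\xh)$, the diagonal terms $\pm\xib/(2\xh)$ and $\pm\etab/(2\xh)$ cancel exactly. Only the cross-couplings $\etab/(2\xh)$ and $-\xib/(2\xh)$ remain, which is \eqref{eq:sys}. The role of the weight $\sqrt{\xh}$ is precisely to remove the self-coupling. This cancellation is the one step requiring care with signs, and I would check it term by term for both equations.

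\emph{Isometry.} We have
\begin{equation*}
\xib^2+\etab^2=\tfrac{\xh}{2}\bigl(u_{\xh}^2+u_t^2\bigr),
\end{equation*}
and $\tfrac{\xh}{2}\,d\xh=dx$. Therefore
\begin{equation*}
\int_0^2(\xib^2+\etab^2)\,d\xh=\int_0^1\bigl(u_{\xh}^2+u_t^2\bigr)\,dx=\int_0^1\bigl(xu_x^2+u_t^2\bigr)\,dx=2E,
\end{equation*}
which is \eqref{eq:isometry}.

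\emph{Tip behaviour.} For $Z\in\mathcal X_0$, the functions $u_{\xh}$ and $u_t$ are continuous on $[0,2]$, and by \eqref{eq:class} we have $u_{\xh}(0)=0$. Hence $u_{\xh}+u_t=u_t(0)+o(1)$ and $u_{\xh}-u_t=-u_t(0)+o(1)$. Multiplying by $\sqrt{\xh}/2$ gives \eqref{eq:tipbehavior}. Both invariants are therefore $O(\sqrt{\xh})$ and vanish at $0$. Moreover
\begin{equation*}
\xib^2-\etab^2=\xh\,u_{\xh}u_t=o(\xh)\to0,
\end{equation*}
so the energy flux vanishes at the free end.

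The only genuine subtlety is the equivalence of the two forms of \eqref{eq:tip}. The lemma uses the stronger class condition \eqref{eq:class}, $\sqrt x\,u_x\to0$, and not merely $x u_x\to0$. I would remark that this condition, rather than \eqref{eq:tip}, is what yields the $o(\sqrt{\xh})$ remainders. It is also why no boundary condition is imposed at $\xh=0$ in \eqref{eq:sys}: the vanishing there is a property of the state class and is not an imposed datum.
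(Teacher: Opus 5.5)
Your proposal is correct and follows the paper's proof. It uses the same first-order pair $(y,v)=(\sqrt x\,u_x,u_t)$, the same invariants $\xi,\eta$, and the same $\sqrt{\xh}$ weighting that cancels the diagonal terms, and it reads the isometry and tip behaviour off $\xib=\frac{\sqrt{\xh}}{2}(u_{\xh}+u_t)$. The only difference is the order of steps: you pass to $\xh$ before forming the first-order system, whereas the paper goes through the tension form $(\tau,v)$ in $x$ and substitutes $\xh=2\sqrt x$ afterwards.
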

\emph{Proof.} With the velocity $v=u_t$ and the tension force $\tau=x\,u_x$, the chain \eqref{eq:plant} is the pair
\begin{subequations}\label{eq:physical}
\begin{align}
\tau_t&=x\,v_x,\label{eq:physical1}\\
v_t&=\tau_x,\label{eq:physical2}
\end{align}
\end{subequations}
the first-order form in which the spatial derivatives sit off the diagonal: linear acoustics, with $\tau$ in the role of the pressure and $x$ in that of the modulus, vanishing at the free end. Scaling the force, $y=\tau/\sqrt x=\sqrt x\,u_x$, puts the same factor $\sqrt x$ on both spatial derivatives,
\begin{subequations}\label{eq:zyscaled}
\begin{align}
y_t&=\sqrt x\,v_x,\label{eq:zyscaled1}\\
v_t&=\sqrt x\,y_x+\frac{y}{2\sqrt x},\label{eq:zyscaled2}
\end{align}
\end{subequations}
at the price of the one term that breaks the symmetry between the two equations. The symmetric and antisymmetric combinations $\xi=(v+y)/2$ and $\eta=(y-v)/2$ restore it, carrying the convection onto the diagonal and the singular term off it,
\begin{subequations}\label{eq:xietax}
\begin{align}
\xi_t&=\ \ \sqrt x\,\xi_x+\frac{\xi+\eta}{4\sqrt x},\label{eq:xietax1}\\
\eta_t&=-\sqrt x\,\eta_x-\frac{\xi+\eta}{4\sqrt x}.\label{eq:xietax2}
\end{align}
\end{subequations}
The substitution $\xh=2\sqrt x$ turns $\sqrt x\,\partial_x$ into $\partial_{\xh}$ and $1/(4\sqrt x)$ into $1/(2\xh)$, giving unit speeds,
\begin{subequations}\label{eq:sysxieta}
\begin{align}
\xi_t&=\ \ \xi_{\xh}+\frac{\xi+\eta}{2\xh},\label{eq:sysxieta1}\\
\eta_t&=-\eta_{\xh}-\frac{\xi+\eta}{2\xh},\label{eq:sysxieta2}
\end{align}
\end{subequations}
and $\xib=\sqrt{\xh}\,\xi$, $\etab=\sqrt{\xh}\,\eta$, for which $\sqrt{\xh}\,\xi_{\xh}=\xib_{\xh}-\xib/(2\xh)$, remove the diagonal terms and give \eqref{eq:sys}; writing $\xh=2\sqrt x$ in $\xib=\sqrt{\xh}(y+v)/2$ gives \eqref{eq:riemanndef}. For \eqref{eq:isometry}, $\xib^2+\etab^2=\xh(y^2+v^2)/2=\sqrt x\,(x\,u_x^2+u_t^2)$ and $d\xh=dx/\sqrt x$. For \eqref{eq:tipbehavior}, $u_{\xh}(0)=0$ by \eqref{eq:class} and $\xib=\sqrt{\xh}(u_t+u_{\xh})/2$. \hfill$\square$

The displacement enters the design through a second realization of the same dynamics. Let $p$ be the impulse potential of \eqref{eq:physical},
\begin{equation}\label{eq:impulse}
p_x=u_t,\qquad p_t=x\,u_x,\qquad p(0,t)=0,
\end{equation}
which exists because $\partial_t(x u_x)=\partial_x u_t$ is \eqref{eq:plant}, and let
\begin{equation}\label{eq:wdef}
w(x,t)=\frac{p(x,t)}{\sqrt x}=\frac{1}{\sqrt x}\int_0^x u_t(s,t)\,ds .
\end{equation}
Then
\begin{equation}\label{eq:uwsystem}
w_t=\sqrt x\,u_x,\qquad u_t=\sqrt x\,w_x+\frac{w}{2\sqrt x},
\end{equation}
the second because the two terms in $w/(2\sqrt x)$ cancel; comparing with \eqref{eq:zyscaled}, the pair $(w,u)$ satisfies the same system as the pair $(\sqrt x\,u_x,u_t)$, with the displacement in the slot the velocity occupies there. One transformation therefore serves both, and the displacement is carried by the second.
In that coordinate $\sqrt x\,\partial_x=\partial_{\xh}$, so the second component of \eqref{eq:state} is $u_{\xh}$, and

\begin{remark}\label{rem:hardy}
In the coordinate $\xh$ the chain is the radially symmetric two-dimensional wave equation on a disk of radius $2$, and the free end is its centre.
\end{remark}

\begin{remark}\label{rem:elsewhere}
The degeneracy is not peculiar to the chain, and neither is \eqref{eq:physical}. In a fluid column at rest under gravity, with $x$ the distance below the free surface, the hydrostatic pressure is the weight of the column above and vanishes linearly at the surface; the square of the sound speed vanishes with it, and linearized acoustics at uniform background density is \eqref{eq:physical}, with $\tau$ the pressure perturbation and $v$ the velocity. The chain's tension is the weight below and vanishes at the free end in the same way. The operator \eqref{eq:plant} arises as well at a shoreline of constant slope, where the shallow water equations degenerate with the depth in place of the tension \cite{CarrierGreenspan}, and, by Remark \ref{rem:hardy}, on a disk.
\end{remark}

\begin{remark}\label{rem:invariant}
No change of Riemann variables bounds the coupling. Rescaling by $\tilde\xi=\varphi\,\xib/\sqrt{\xh}$ and $\tilde\eta=\psi\,\etab/\sqrt{\xh}$, with $\varphi,\psi$ positive and $C^1$, leaves \eqref{eq:sys} in the same form with couplings $c_1=\varphi/(2\psi\xh)$ and $c_2=-\psi/(2\varphi\xh)$, whose product
\begin{equation}\label{eq:product}
c_1(\xh)\,c_2(\xh)=-\frac{1}{4\xh^2}
\end{equation}
is the same for every $\varphi,\psi$. Hence $\max\{|c_1|,|c_2|\}\ge1/(2\xh)$ always, the chain does not belong to the bounded-coefficient $2\times2$ class in any Riemann coordinates, and the designs of \cite{VazquezKrsticCoron,CoronVazquezKrsticBastin} do not apply to it after any change of variables. Diagonal rescalings are the only ones available, since a pointwise linear map leaving the principal part equal to $\operatorname{diag}(1,-1)$ preserves its eigenspaces, the speeds being distinct.
\end{remark}

\section{Controller and closed-loop stability with assignable decay}\label{sec:rapid}

\subsection{Target and controller}\label{sec:target}

\begin{figure}[t]
\centering
\begin{tikzpicture}[>=stealth,line width=0.5pt,font=\small]
  \tikzstyle{blk}=[draw,rounded corners=3pt,minimum height=11mm]
  \node[blk,minimum width=68mm] (A) at (0,2.6) {$\alpha$: \textbf{leftward} transport with singular skew-coupling};
  \node[blk,minimum width=68mm] (B) at (0,0)   {$\beta$: \textbf{rightward} transport with singular skew-coupling};
  \foreach \i in {0,...,11}{
    \pgfmathsetmacro{\xx}{-3.1+0.56*\i}
    \ifodd\i
      \draw[->] (\xx,2.05) -- (\xx,0.56);
    \else
      \draw[->] (\xx,0.56) -- (\xx,2.05);
    \fi}
  \node[draw,circle,inner sep=1.6pt] (S) at (5.8,2.6) {$+$};
  \node[blk,minimum width=24mm,minimum height=18mm,align=center] (K) at (8.0,2.6)
    {$\sigma$ block\\[3pt]$\dfrac{2c_0}{s+\lambda+c_0}$};
  \draw[->] (B.east) -- (8.0,0) node[midway,below]{\footnotesize$\beta(2,t)$} -- (K.south);
  \fill (5.8,0) circle (1.4pt);
  \draw[->] (5.8,0) -- node[left,pos=0.75]{$-$} (S);
  \draw[->] (K.west) -- node[above]{\footnotesize$\sigma$} (S);
  \draw[->] (S) -- node[above,pos=0.5]{\footnotesize$\alpha(2,t)$} (A.east);
\end{tikzpicture}
\caption{The target \eqref{eq:tgt}, \eqref{eq:tgtbcode} as a block diagram. The two transports counterconvect and are coupled along the whole domain by the singular term, which is the free-end reflection; at the trolley the outgoing $\beta$ drives one first-order boundary state and is differenced with it to form the incoming $\alpha$. The block from $(\alpha-\beta)(2,t)$ to $\sigma$, going through the $\sigma$ block, is strictly passive, and so is the block from $\sigma$ to $-(\alpha-\beta)(2,t)$, going through the two transports.}
\label{fig:blocks}
\end{figure}

The plant \eqref{eq:sys} is two counterpropagating waves converted into one another, at the free end, by the singular coupling $1/(2\xh)$. The target keeps both,
\begin{subequations}\label{eq:tgt}
\begin{align}
\alpha_t&=\ \ \alpha_{\xh}+\frac{1}{2\xh}\,\beta-\lambda\alpha,\label{eq:tgta}\\
\beta_t&=-\beta_{\xh}-\frac{1}{2\xh}\,\alpha-\lambda\beta,\label{eq:tgtb}
\end{align}
\end{subequations}
with the spring \eqref{eq:goalbc} at the trolley, which in these variables relates the two traces, as in Figure \ref{fig:blocks},
\begin{subequations}\label{eq:tgtbcode}
\begin{align}
\dot\sigma+(\lambda+c_0)\,\sigma&=2c_0\,\beta(2,t),\label{eq:tgtbcsigma}\\
\alpha(2,t)&=\sigma(t)-\beta(2,t).\label{eq:tgtbcalpha}
\end{align}
\end{subequations}
It is the chain's own first-order model, with the free-end conversion mechanism unchanged, a uniform damping $-\lambda$ added, and the support elastically restrained. The coupling is conservative, the flux $\alpha^2-\beta^2$ vanishes at the free end, and at the trolley $\alpha^2-\beta^2=\sigma(\alpha-\beta)(2,t)=-\bigl(\sigma\dot\sigma+\lambda\sigma^2\bigr)/c_0$ by \eqref{eq:tgtbcode}, so that the Lyapunov derivative satisfies not an inequality but the identity
\begin{equation}\label{eq:tgtdecay}
\frac{d}{dt}\Bigl[\int_0^2\bigl(\alpha^2+\beta^2\bigr)d\xh+\frac{\sigma^2}{2c_0}\Bigr]
=-2\lambda\Bigl[\int_0^2\bigl(\alpha^2+\beta^2\bigr)d\xh+\frac{\sigma^2}{2c_0}\Bigr].
\end{equation}

This is \eqref{eq:goal}--\eqref{eq:goalbc} in the Riemann variables.

The transformation is an exponential multiplier and a Volterra correction,
\begin{subequations}\label{eq:transf}
\begin{align}
\alpha(\xh,t)&=e^{\lambda\xh}\Bigl[\xib(\xh,t)-\int_0^{\xh}\Bigl(K^{\xi\xi}\xib(\yh,t)+K^{\xi\eta}\etab(\yh,t)\Bigr)d\yh\Bigr],\label{eq:transfa}\\
\beta(\xh,t)&=e^{-\lambda\xh}\Bigl[\etab(\xh,t)-\int_0^{\xh}\Bigl(K^{\eta\xi}\xib(\yh,t)+K^{\eta\eta}\etab(\yh,t)\Bigr)d\yh\Bigr],\label{eq:transfb}
\end{align}
\end{subequations}
the kernels being evaluated at $(\xh,\yh)$. The arguments $\xib,\etab$ are the Riemann variables \eqref{eq:riemanndef}, which in the coordinate \eqref{eq:xhdef} read
\begin{equation}\label{eq:riemannxh}
\xib=\frac{\sqrt{\xh}}{2}\Bigl(u_{\xh}+u_t\Bigr),\qquad
\etab=\frac{\sqrt{\xh}}{2}\Bigl(u_{\xh}-u_t\Bigr).
\end{equation}
The two steps of \eqref{eq:transf} do separate jobs. The multiplier creates the damping $-\lambda$ and, in creating it, changes the two coupling coefficients from $\pm1/(2\xh)$ to $e^{-2\lambda\xh}/(2\xh)$ and $-e^{2\lambda\xh}/(2\xh)$; the kernels install that tilt. The product of the two couplings is
\begin{equation}\label{eq:tilt}
\frac{e^{-2\lambda\xh}}{2\xh}\cdot\Bigl(-\frac{e^{2\lambda\xh}}{2\xh}\Bigr)=-\frac{1}{4\xh^2},
\end{equation}
the plant's own, so the strength of the free-end conversion is untouched and only its apportionment between the two directions is changed. This is what a design that removed the coupling cannot do.

By \eqref{eq:uwsystem} the pair $(w,u)$ obeys the same system as the state, so \eqref{eq:transf} applies to it with the same four kernels. Carrying it out in full: form
\begin{empheq}[box=\fbox]{equation}\label{eq:riemanndisp}
\xib_d=\frac{x^{1/4}}{\sqrt2}\Bigl(u+\frac{1}{\sqrt x}\int_0^xu_t\,ds\Bigr),\qquad
\etab_d=\frac{x^{1/4}}{\sqrt2}\Bigl(-u+\frac{1}{\sqrt x}\int_0^xu_t\,ds\Bigr),
\end{empheq}
which is \eqref{eq:riemanndef} with $(w,u)$ in place of $(\sqrt x\,u_x,u_t)$, transform them by \eqref{eq:transf},
\begin{subequations}\label{eq:transfdisp}
\begin{align}
\alpha_d(\xh,t)&=e^{\lambda\xh}\Bigl[\xib_d(\xh,t)-\int_0^{\xh}\Bigl(K^{\xi\xi}\xib_d(\yh,t)+K^{\xi\eta}\etab_d(\yh,t)\Bigr)d\yh\Bigr],\label{eq:transfdispa}\\
\beta_d(\xh,t)&=e^{-\lambda\xh}\Bigl[\etab_d(\xh,t)-\int_0^{\xh}\Bigl(K^{\eta\xi}\xib_d(\yh,t)+K^{\eta\eta}\etab_d(\yh,t)\Bigr)d\yh\Bigr],\label{eq:transfdispb}
\end{align}
\end{subequations}
which by the intertwining satisfy \eqref{eq:tgta}--\eqref{eq:tgtb}, the same target equations as $\alpha,\beta$, of which they are the antiderivatives in time; and undo \eqref{eq:riemanndisp} on the result:
\begin{equation}\label{eq:psiomegadef}
\psi(x,t)=\frac{\alpha_d-\beta_d}{\sqrt{\xh}},\qquad
\omega(x,t)=\frac{\alpha_d+\beta_d}{\sqrt{\xh}},\qquad \xh=2\sqrt x .
\end{equation}
So $\psi$ and $\omega$ are functionals of $u(\cdot,t)$ and $u_t(\cdot,t)$ at the present instant, and $\psi$ carries $u$ itself: replacing $u$ by $u+c$ changes $\psi$ and with it the control, which is how the design sees the rigid translation. They satisfy
\begin{equation}\label{eq:psiomega}
\omega_t=\sqrt x\,\psi_x-\lambda\omega,\qquad
\psi_t=\sqrt x\,\omega_x+\frac{\omega}{2\sqrt x}-\lambda\psi,
\end{equation}
and eliminating $\omega$ returns the target \eqref{eq:goal} in the physical variables. Since $\sqrt x\,\psi_x=\omega_t+\lambda\omega$, the spring \eqref{eq:goalbc} at the trolley is
\begin{equation}\label{eq:tgtbc}
\omega_t(1,t)+\lambda\,\omega(1,t)+c_0\,\psi(1,t)=0,
\end{equation}
of which \eqref{eq:tgtbcode} is the time derivative, the state pair being the time derivative of $(w,u)$.

The kernels are given by a series. With $\sigma=\yh/\xh$,
\begin{equation}\label{eq:Kseries}
K^{\xi\xi}(\xh,\yh)=\sqrt{\frac{\yh}{\xh}}\sum_{n\ge1}\lambda^n\xh^{\,n-1}a_n(\sigma),
\end{equation}
and the same with $b_n,c_n,d_n$ in place of $a_n$ for $K^{\xi\eta},K^{\eta\xi},K^{\eta\eta}$, where each coefficient is a polynomial of degree at most $n-1$ determined recursively by \eqref{eq:Rn}--\eqref{eq:Rndiag}, the first two orders being
\begin{equation}\label{eq:Klead}
a_1=c_1=\tfrac12,\quad b_1=d_1=-\tfrac12,\qquad
a_2=d_2=\tfrac{\sigma-1}{4},\quad b_2=c_2=\tfrac{\sigma+1}{4}.
\end{equation}
The factor $\sqrt{\yh/\xh}$ is not an estimate appended to the solution: it is the regular Frobenius branch that the degeneracy at the free end selects, and what remains after it is removed is polynomial at every order. The series converges for every $\lambda$ on the whole triangle $0<\yh<\xh<2$, so the kernels are constructive: no two-dimensional problem has to be solved numerically.

\begin{table}[t]
\centering
\fbox{\begin{minipage}{0.92\textwidth}
\medskip
\noindent\textbf{Modules in feedback \eqref{eq:rapidfb}:}
\begin{subequations}\label{eq:PsiOmega}
\begin{align}
\Psi&=\cosh(2\lambda)\,u(1,t)+\sinh(2\lambda)\!\int_0^1\!u_t\,dx
+\frac{e^{-2\lambda}J_2-e^{2\lambda}J_1}{\sqrt2},\label{eq:Psidef}\\
\Omega&=\sinh(2\lambda)\,u(1,t)+\cosh(2\lambda)\!\int_0^1\!u_t\,dx
-\frac{e^{2\lambda}J_1+e^{-2\lambda}J_2}{\sqrt2},\label{eq:Omegadef}
\end{align}
\end{subequations}
\begin{subequations}\label{eq:Idef}
\begin{align}
\binom{I_1}{I_2}&=\int_0^2\begin{pmatrix}K^{\xi\xi}&K^{\xi\eta}\\ K^{\eta\xi}&K^{\eta\eta}\end{pmatrix}(2,\yh)\binom{\xib}{\etab}\,d\yh,\label{eq:Idefa}\\[2pt]
\binom{J_1}{J_2}&=\int_0^2\begin{pmatrix}K^{\xi\xi}&K^{\xi\eta}\\ K^{\eta\xi}&K^{\eta\eta}\end{pmatrix}(2,\yh)\binom{\xib_d}{\etab_d}\,d\yh,\label{eq:Idefb}
\end{align}
\end{subequations}
\end{minipage}}

\medskip

\fbox{\begin{minipage}{0.92\textwidth}
\medskip
\noindent\textbf{Kernels in \eqref{eq:PsiOmega}, \eqref{eq:Idef}, \eqref{eq:rapidfb}:}
\begin{subequations}\label{eq:Ksystem}
\begin{eqnarray}
K^{\xi\xi}_{\xh}+K^{\xi\xi}_{\yh}&=&-\frac{1}{2\yh}K^{\xi\eta}-\frac{e^{-2\lambda\xh}}{2\xh}K^{\eta\xi},\label{eq:Kxx}\\
K^{\xi\eta}_{\xh}-K^{\xi\eta}_{\yh}&=&\ \ \frac{1}{2\yh}K^{\xi\xi}-\frac{e^{-2\lambda\xh}}{2\xh}K^{\eta\eta},\label{eq:Kxy}\\
K^{\eta\xi}_{\xh}-K^{\eta\xi}_{\yh}&=&\ \ \frac{1}{2\yh}K^{\eta\eta}-\frac{e^{2\lambda\xh}}{2\xh}K^{\xi\xi},\label{eq:Kyx}\\
K^{\eta\eta}_{\xh}+K^{\eta\eta}_{\yh}&=&-\frac{1}{2\yh}K^{\eta\xi}-\frac{e^{2\lambda\xh}}{2\xh}K^{\xi\eta},\label{eq:Kyy}\\
K^{\xi\eta}(\xh,\xh)&=&\frac{e^{-2\lambda\xh}-1}{4\xh},\qquad
K^{\eta\xi}(\xh,\xh)\ =\ \frac{e^{2\lambda\xh}-1}{4\xh},\label{eq:Kdiag}
\end{eqnarray}
\end{subequations}
\end{minipage}}
\caption{The feedback law \eqref{eq:rapidfb} in full. Above: the two trolley functionals and the four kernel integrals, whose arguments are the Riemann variables \eqref{eq:riemanndef} and \eqref{eq:riemanndisp} in the coordinate \eqref{eq:xhdef}. Below: the kernel equations on $0<\yh<\xh<2$, with the diagonal data.}
\label{tab:law}
\end{table}

\begin{theorem}[Rapid stabilization]\label{thm:rapid}
Let $\lambda>0$ and $c_0>0$, let the kernels of \eqref{eq:transf} solve \eqref{eq:Ksystem} of Table \ref{tab:law} and be bounded at $\yh=0$ with $|K|\le C_\lambda\sqrt{\yh/\xh}$, and let the force at the trolley be
\begin{equation}\label{eq:rapidfb}
u_x(1,t)=\frac{1}{\cosh 2\lambda}\Bigl[-c_0\Psi(t)-\lambda\,\Omega(t)
-u_t(1,t)\sinh2\lambda+\frac{e^{2\lambda}I_1(t)+e^{-2\lambda}I_2(t)}{\sqrt2}\Bigr],
\end{equation}
with $\Psi,\Omega$ and $I_1,I_2,J_1,J_2$ given by \eqref{eq:PsiOmega}--\eqref{eq:Idef} of Table \ref{tab:law}. Every quantity there, along with \eqref{eq:xhdef}, \eqref{eq:riemanndef} and \eqref{eq:riemanndisp}, is an integral over the chain of the profiles $u(\cdot,t)$ and $u_t(\cdot,t)$ at the present instant. Then \eqref{eq:transf} carries the plant into \eqref{eq:tgt} with \eqref{eq:tgtbc} and is boundedly invertible; for every initial state the closed loop has a unique mild solution, classical for data in the domain of its generator; and
\begin{equation}\label{eq:ratedecay}
\bigl\|(u,u_t)(t)\bigr\|\le M_{\lambda,c_0}\,e^{-\lambda t}\,\bigl\|(u,u_t)(0)\bigr\|,\qquad t\ge0,
\end{equation}
in the norm \eqref{eq:normc0}, for some $M_{\lambda,c_0}\ge1$.
\end{theorem}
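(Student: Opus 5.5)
The plan is the standard backstepping route in four steps: verify that \eqref{eq:transf} intertwines plant and target, derive the boundary law from the target's trolley condition, invert the transformation, and transfer the target's decay identity to the plant norm.

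\emph{Step 1 (intertwining).} I would differentiate \eqref{eq:transfa}--\eqref{eq:transfb} in $t$ and substitute \eqref{eq:sysxi}--\eqref{eq:syseta}. I then integrate by parts in $\yh$ and collect terms. The exponential multiplier produces $-\lambda\alpha$, $-\lambda\beta$ and tilts the couplings to $e^{\mp2\lambda\xh}/(2\xh)$. The interior integrals cancel exactly when \eqref{eq:Kxx}--\eqref{eq:Kyy} hold. The terms at $\yh=\xh$ cancel by the diagonal data \eqref{eq:Kdiag}: the jump $(e^{\mp2\lambda\xh}-1)/(2\xh)$ between target and plant coupling must equal $2K(\xh,\xh)$. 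The terms at $\yh=0$ are of the form $K(\xh,0)\xib(0)$ and $K(\xh,0)\etab(0)$. They vanish because $|K|\le C_\lambda\sqrt{\yh/\xh}$ and, by \eqref{eq:tipbehavior}, $\xib,\etab=O(\sqrt{\yh})$ on $\mathcal X_0$; the singular factors $1/(2\yh)$ in the kernel equations are integrable against $K\xib$ for the same reason. The same computation applies verbatim to $(\xib_d,\etab_d)$ by \eqref{eq:uwsystem}, so $(\alpha_d,\beta_d)$ obey \eqref{eq:tgt} and $(\psi,\omega)$ obey \eqref{eq:psiomega}.

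\emph{Step 2 (boundary law).} I evaluate \eqref{eq:transf} and \eqref{eq:transfdisp} at $\xh=2$, which produces $I_{1,2}$ and $J_{1,2}$. Converting back through \eqref{eq:riemannxh} and \eqref{eq:riemanndisp} at $x=1$ expresses $\psi(1)$ as $\Psi$ and $\omega(1)$ as $\Omega$, and expresses $\sqrt x\,\psi_x(1)$ through $u_x(1)$, $u_t(1)$, $I_1$, $I_2$ with the $\cosh2\lambda$, $\sinh2\lambda$ weights. Imposing \eqref{eq:tgtbc}, that is $\sqrt x\psi_x(1)=-c_0\psi(1)-\lambda\omega(1)$, and solving for $u_x(1)$ gives \eqref{eq:rapidfb}. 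This is bookkeeping, and \eqref{eq:tgtbcode} is its time derivative.

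\emph{Step 3 (inverse).} I would posit an inverse of the same form, $\xib=e^{-\lambda\xh}\alpha+\int_0^{\xh}(L\cdot)$. The statement suggests that $L$ is the kernel with $\lambda\to-\lambda$, since then the target is the plant with damping $+\lambda$. Its kernel equations are \eqref{eq:Ksystem} with $\lambda$ reversed, and the series \eqref{eq:Kseries} converges for every real $\lambda$, so $L$ exists with the same $\sqrt{\yh/\xh}$ bound. Boundedness of both Volterra operators on $L^2(0,2)$ then follows from boundedness of the kernels on the triangle. Together with the equivalent handling of the displacement pair and of $\sigma$ versus $u(1)$, this gives equivalence of the norm \eqref{eq:normc0} with the target energy in \eqref{eq:tgtdecay}, augmented by the $(\alpha_d,\beta_d)$ energy. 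I expect this last point to be the main obstacle. The displacement enters only through $\psi$, so I must show that $\int(\alpha_d^2+\beta_d^2)+\psi(1)^2$ (or its $c_0$ analogue, coming from the target displacement energy $\int x\psi_x^2+\omega^2$-type terms plus $c_0\psi(1)^2$) controls $c_0u(1)^2$. I would try to do this by inverting the trolley relation $\Psi\mapsto u(1)$, using $\cosh2\lambda\neq0$ together with the inverse transform.

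\emph{Step 4 (well-posedness and decay).} The target \eqref{eq:goal}--\eqref{eq:goalbc} generates a $C_0$ semigroup: by \eqref{eq:tgtdecay} it is $e^{-\lambda t}$-contractive in the target energy, and the Lumer–Phillips theorem applies to the dissipative operator with the elastic boundary. Conjugating by the bounded invertible transformation of Step 3 yields the closed-loop semigroup, its mild and classical solutions, and uniqueness. The estimate \eqref{eq:ratedecay} follows with $M_{\lambda,c_0}$ equal to the product of the two transformation norms. For density and the $O(\sqrt{\xh})$ behavior needed in Step 1, I would verify the intertwining on the generator's domain, which lies in $\mathcal X_0$, and extend by continuity.
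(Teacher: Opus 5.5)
Steps 1 and 2 follow the paper, with one slip that cancels. What \eqref{eq:transf} produces at the trolley from $u_x(1),u_t(1),I_1,I_2$ is $\omega_t(1)$, as in \eqref{eq:omegat1}, not $\sqrt x\,\psi_x(1)$. Condition \eqref{eq:tgtbc} reads $\omega_t(1)=-c_0\psi(1)-\lambda\omega(1)$, equivalently $\sqrt x\,\psi_x(1)=-c_0\psi(1)$. Since you mislabel both sides alike, \eqref{eq:rapidfb} comes out right.

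The gap is the one you flag in Steps 3--4, and it is more than a missing estimate.
\begin{itemize}
\item \emph{The functionals you name satisfy no dissipation identity.} $(\alpha_d,\beta_d)$ obey \eqref{eq:tgt}, but their trolley flux is $(\alpha_d^2-\beta_d^2)(2)=2\,\omega(1)\psi(1)$, which has no sign. Adding $\psi(1)^2$ does not absorb it, because its derivative brings in the uncontrolled trace $\psi_t(1)$.
\item \emph{The forward trolley relation \eqref{eq:trolley} does not invert through $\cosh2\lambda\neq0$.} At fixed strain and velocity, $J_1,J_2$ also depend on $u(1)$, through $u$ itself in $\xib_d,\etab_d$. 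The true coefficient of $u(1)$ in $\Psi$ is $\Psi$ on the rigid state $u\equiv1$, namely $I_0(2\lambda)$ by \eqref{eq:shift} with $s=0$.
\item \emph{The inverse route needs a trace you do not control.} The inverse transform at $\xh=2$ gives $u(1)=\cosh(2\lambda)\psi(1)-\sinh(2\lambda)\omega(1)$ plus integrals of $(\alpha_d,\beta_d)$. This requires $\omega(1)$, and nothing in your plan controls it.
\item \emph{$\sigma$ is bounded by $N_u$ only through the feedback.} $\sigma=(\alpha+\beta)(2)=\sqrt2\,\omega_t(1)$ contains $u_x(1)$ and $u_t(1)$ by \eqref{eq:omegat1}. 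It is controlled by $N_u$ only via $\sigma=-\sqrt2(\lambda\Omega+c_0\Psi)$.
\end{itemize}

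Your route does close, as follows. Written in $(\alpha_d,\beta_d)$, \eqref{eq:tgtbc} is exactly \eqref{eq:tgtbcode} with $\sigma_d=(\alpha_d+\beta_d)(2)=\sqrt2\,\omega(1)$. Hence $\mathcal E_2=\int_0^2(\alpha_d^2+\beta_d^2)\,d\xh+\sigma_d^2/(2c_0)$ obeys \eqref{eq:tgtdecay} exactly, as does the bracket $\mathcal E_1$ there. This gives $\omega(1)=\sigma_d/\sqrt2$ and $\psi(1)=-(\sigma+\lambda\sigma_d)/(\sqrt2\,c_0)$. Then $\mathcal E_1+\mathcal E_2\asymp N_u^2$ follows from four ingredients: bounded invertibility of $\mathcal T_\lambda$ on $L^2$, \eqref{eq:poincare}, $\|w\|\le\|u_t\|$, and the inverse trolley formula.

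In fact $\mathcal E_1$ alone suffices. At fixed $(\xib,\etab)$, $\sigma$ is linear in $u(1)$ with slope $-\sqrt2\bigl(\lambda I_1(2\lambda)+c_0I_0(2\lambda)\bigr)\neq0$, and its $u(1)$-free part is bounded. The map to $(\alpha,\beta,\sigma)$ is therefore block-triangular with invertible diagonal.

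The paper does not use \eqref{eq:tgtdecay} for the estimate. It uses $N_\psi$, built on $(\sqrt x\,\psi_x,\psi_t+\lambda\psi,\sqrt{c_0}\,\psi(1))$. This decays exactly because $\varphi=e^{\lambda t}\psi$ is the conservative chain with a spring. Lemma \ref{lem:skew} (form method, skew-adjoint generator) then supplies the well-posedness that your Lumer--Phillips step only asserts; the range condition for an operator degenerate at the tip is left unchecked in your plan.

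The price of the paper's route is that $N_\psi$ is the image of the shifted state \eqref{eq:bridge}. Hence the Fredholm argument of Lemma \ref{lem:shift} and the trace exchange of Lemma \ref{lem:trace}, where $\omega(1)=\int_0^1(\psi_t+\lambda\psi)\,dx$ supplies exactly the trace you lack. Functionals built on the unshifted pairs avoid the shift lemma.

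Lastly, Step 3 needs no identification of the inverse: a Volterra operator with bounded kernel is invertible by its Neumann series. The identity $\mathcal T_\lambda^{-1}=\mathcal T_{-\lambda}$ is a separate fact, proved in Lemma \ref{lem:HS} through uniqueness of the $\lambda=0$ kernels.
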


\emph{Proof.} Write $a$ and $b$ for the bracketed quantities in \eqref{eq:transf}, so that $\alpha=e^{\lambda\xh}a$ and $\beta=e^{-\lambda\xh}b$, and put $c(\yh)=\frac1{2\yh}$, $p(\xh)=\frac{e^{-2\lambda\xh}}{2\xh}$, $q(\xh)=\frac{e^{2\lambda\xh}}{2\xh}$. Substituting into \eqref{eq:tgt}, the multiplier removes the damping and tilts the couplings, so that \eqref{eq:tgt} holds if and only if
\begin{equation}\label{eq:intermediate}
a_t=a_{\xh}+p(\xh)\,b,\qquad b_t=-b_{\xh}-q(\xh)\,a .
\end{equation}
Differentiating $a$ in $t$, substituting \eqref{eq:sys} and integrating by parts,
\begin{eqnarray}
a_t-a_{\xh}-p\,b&=&\Bigl(c(\xh)+2K^{\xi\eta}(\xh,\xh)-p(\xh)\Bigr)\etab(\xh)
+\lim_{\yh\to0^+}\Bigl(K^{\xi\xi}\xib-K^{\xi\eta}\etab\Bigr)\nonumber\\
&&+\int_0^{\xh}\Bigl(K^{\xi\xi}_{\xh}+K^{\xi\xi}_{\yh}+c(\yh)K^{\xi\eta}+p(\xh)K^{\eta\xi}\Bigr)\xib\,d\yh\nonumber\\
&&+\int_0^{\xh}\Bigl(K^{\xi\eta}_{\xh}-K^{\xi\eta}_{\yh}-c(\yh)K^{\xi\xi}+p(\xh)K^{\eta\eta}\Bigr)\etab\,d\yh ,\label{eq:identA}
\end{eqnarray}
the traces of $K^{\xi\xi}$ on the diagonal cancelling between the two derivatives. The boundary term vanishes, since $K^{\xi\xi},K^{\xi\eta}=O(\sqrt{\yh})$ and $\xib,\etab=O(\sqrt{\yh})$ by \eqref{eq:tipbehavior}, and annihilating the other three groups, together with the same computation on $b_t+b_{\xh}+q\,a$, gives \eqref{eq:Ksystem}--\eqref{eq:Kdiag}, whose solution, with the kernels bounded at $\yh=0$, is \eqref{eq:Kseries}; existence, the bound and the invertibility are Proposition \ref{prop:existence} and Lemma \ref{lem:HS}. For the decay, multiply \eqref{eq:tgta} by $\alpha$ and \eqref{eq:tgtb} by $\beta$ and add: the coupling terms cancel, the flux integrates to $\bigl[\alpha^2-\beta^2\bigr]_0^2$, which vanishes at $\xh=0$ since $\alpha,\beta=O(\sqrt{\xh})$ and equals $-\bigl(\sigma\dot\sigma+\lambda\sigma^2\bigr)/c_0$ at $\xh=2$ by \eqref{eq:tgtbcode}, and \eqref{eq:tgtdecay} follows. For the control law, \eqref{eq:transf} at the trolley gives
\begin{equation}\label{eq:omegat1}
\omega_t(1,t)=\cosh(2\lambda)\,u_x(1,t)+\sinh(2\lambda)\,u_t(1,t)-\frac{e^{2\lambda}I_1+e^{-2\lambda}I_2}{\sqrt2},
\end{equation}
and substituting it in \eqref{eq:tgtbc}, with $\psi(1,t)$ and $\omega(1,t)$ written out as \eqref{eq:PsiOmega}, and solving for $u_x(1,t)$ gives \eqref{eq:rapidfb}. For well-posedness and the estimate, put $\psi=e^{-\lambda t}\varphi$; by \eqref{eq:psiomega} and \eqref{eq:tgtbc}, $\varphi$ is the chain with a spring,
\begin{equation}\label{eq:phispring}
\varphi_{tt}=(x\,\varphi_x)_x,\qquad \lim_{x\to0^+}x\,\varphi_x=0,\qquad \varphi_x(1,t)=-c_0\,\varphi(1,t),
\end{equation}
whose generator is skew-adjoint by Lemma \ref{lem:skew} and which therefore conserves $\int_0^1(x\varphi_x^2+\varphi_t^2)dx+c_0\varphi^2(1)$. Hence the conservative chain with a spring generates a unitary group, multiplication by $e^{-\lambda t}$ gives the target semigroup, and $N_\psi(t)=e^{-\lambda t}N_\psi(0)$ exactly. The augmented similarity of \eqref{eq:normequiv}, which Lemmas \ref{lem:HS}, \ref{lem:shift} and \ref{lem:trace} establish, is bounded with bounded inverse and transports that semigroup to the closed loop, which gives existence and uniqueness and, with the exact decay, \eqref{eq:ratedecay}. \hfill$\square$

\begin{remark}\label{rem:impedance}
The collocated part of \eqref{eq:rapidfb} is velocity feedback at the actuated end with gain $\tanh(2\lambda)$, a mismatched impedance; the matched damper $u_x(1,t)=-u_t(1,t)$ is its limit as $\lambda\to\infty$. The line so placed is the one relocated in \cite{SmyshlyaevKrstic} for the anti-damped string, where the reflection at the uncontrolled end fixes it at $\frac12\ln\frac{1+q}{1-q}$ and the design moves it to $-\frac12\ln\frac{1+c}{1-c}$; the factor $\frac14$ here in place of $\frac12$ is the round trip $4$ in place of $2$. The constant in \eqref{eq:ratedecay} cannot be uniform in $\lambda$: a state whose energy travels away from the trolley is unaffected until information crosses the domain, which forces $M_\lambda\gtrsim e^{2\lambda}$.
\end{remark}

\subsection{Transformation by Abel transmutation}\label{sec:explicit}

Equations \eqref{eq:transf} present the transformation through its kernels. The same transformation has a second form in which no kernel appears, and which shows where the kernels come from: it is the elementary transformation of the ordinary string, seen through the map that turns the string into the chain.

That map is classical. With
\begin{equation}\label{eq:PO}
(\mathcal P_0f)(r)=\frac2\pi\int_0^r\frac{f(y)}{\sqrt{r^2-y^2}}\,dy,\qquad
(\mathcal P_1f)(r)=\frac{2}{\pi r}\int_0^r\frac{y\,f(y)}{\sqrt{r^2-y^2}}\,dy,
\end{equation}
whose inverses are
\begin{equation}\label{eq:POinv}
(\mathcal P_0^{-1}g)(r)=\frac{d}{dr}\int_0^r\frac{y\,g(y)}{\sqrt{r^2-y^2}}\,dy,\qquad
(\mathcal P_1^{-1}g)(r)=\frac1r\frac{d}{dr}\int_0^r\frac{y^2g(y)}{\sqrt{r^2-y^2}}\,dy,
\end{equation}
if $v$ solves the ordinary string $v_{tt}=v_{rr}$ with $v_r(0,t)=0$ and $u=\mathcal P_0v$, then $u$ solves \eqref{eq:plant}--\eqref{eq:tip} in the coordinate $r=\xh$, and the velocity and strain of the two are related by
\begin{equation}\label{eq:PPderiv}
u_t=\mathcal P_0v_t,\qquad u_r=\mathcal P_1v_r,\qquad
\mathcal P_0\cosh(s\,\cdot)=I_0(s\,\cdot),\qquad \mathcal P_1\sinh(s\,\cdot)=I_1(s\,\cdot).
\end{equation}
On the string, uniform damping is produced by multiplying the two invariants $v_t\pm v_r$ by $e^{\pm\lambda r}$, which on the velocity and strain is the pair of hyperbolic combinations below.

\begin{proposition}[Explicit transformation]\label{prop:explicit}
Let
\begin{subequations}\label{eq:expl}
\begin{align}
\gamma&=\mathcal P_0\Bigl[\cosh(\lambda r)\,\mathcal P_0^{-1}u_t+\sinh(\lambda r)\,\mathcal P_1^{-1}u_r\Bigr],\label{eq:explA}\\
\delta&=\mathcal P_1\Bigl[\sinh(\lambda r)\,\mathcal P_0^{-1}u_t+\cosh(\lambda r)\,\mathcal P_1^{-1}u_r\Bigr].\label{eq:explB}
\end{align}
\end{subequations}
Then the transformation \eqref{eq:transf} is
\begin{equation}\label{eq:VWtoab}
\alpha=\frac{\sqrt r}{2}\,\bigl(\gamma+\delta\bigr),\qquad \beta=\frac{\sqrt r}{2}\,\bigl(\delta-\gamma\bigr).
\end{equation}
Writing $\Psi_s(r)=\frac{\sqrt r}{2}\bigl(I_0(sr)+I_1(sr),\ I_1(sr)-I_0(sr)\bigr)$, so that $A\Psi_s=s\Psi_s$,
\begin{equation}\label{eq:shift}
\mathcal T_\lambda\Psi_s=\Psi_{s+\lambda},\qquad
\mathcal T_\mu\mathcal T_\lambda=\mathcal T_{\lambda+\mu},\qquad
\mathcal T_\lambda^{-1}=\mathcal T_{-\lambda}.
\end{equation}
\end{proposition}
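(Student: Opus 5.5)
Let $\mathcal S_\lambda$ denote the map $(u_t,u_r)\mapsto(\alpha,\beta)$ defined by \eqref{eq:expl}--\eqref{eq:VWtoab}. The plan is to prove that $\mathcal S_\lambda$ is a Volterra operator of the same form as \eqref{eq:transf}, that its kernels satisfy \eqref{eq:Ksystem} with the bound $O(\sqrt{\yh/\xh})$, and then to invoke uniqueness of the bounded Frobenius branch. This is the uniqueness underlying Proposition \ref{prop:existence}, and it gives $\mathcal S_\lambda=\mathcal T_\lambda$.

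\emph{Intertwining.} First, $\mathcal S_\lambda$ carries solutions of the plant to solutions of the target. Take a plant solution and set $v_t=\mathcal P_0^{-1}u_t$ and $v_r=\mathcal P_1^{-1}u_r$. By \eqref{eq:PPderiv}, these are the velocity and strain of a string with $v_r(0,t)=0$. The string variables $V=\cosh(\lambda r)v_t+\sinh(\lambda r)v_r$ and $W=\sinh(\lambda r)v_t+\cosh(\lambda r)v_r$ are $\tfrac12[e^{\lambda r}(v_t+v_r)\pm e^{-\lambda r}(v_t-v_r)]$. Since the invariants $v_t\pm v_r$ are transported at unit speed in opposite directions, $V,W$ satisfy the damped string in first-order form:
\[
V_t=W_r-\lambda W,\qquad W_t=V_r-\lambda V .
\]
Next, $\mathcal P_0$ and $\mathcal P_1$ intertwine $\partial_r$ acting between the even and odd slots with the chain operators appearing in \eqref{eq:PPderiv}. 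This is the transmutation identity that yields \eqref{eq:PPderiv}. Applying it, $(\gamma,\delta)=(\mathcal P_0V,\mathcal P_1W)$ satisfies the chain system \eqref{eq:zyscaled} with damping $-\lambda$ added. Passing to invariants via \eqref{eq:VWtoab}, exactly as in the proof of Lemma \ref{lem:riemann}, gives \eqref{eq:tgt}.

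\emph{Volterra form and kernels.} Second, $\mathcal S_\lambda$ has the form \eqref{eq:transf}. The operators $\mathcal P_i$ and $\mathcal P_i^{-1}$ preserve supports in $[0,r]$. Multiplication by $\cosh$ and $\sinh$ equals the identity plus $O(r)$ terms. Hence $\mathcal P_0\cosh(\lambda r)\mathcal P_0^{-1}$ is the identity plus a Volterra operator, and $\mathcal P_1\sinh(\lambda r)\mathcal P_0^{-1}$, $\mathcal P_0\sinh(\lambda r)\mathcal P_1^{-1}$ are pure Volterra operators. I would compute their kernels explicitly by expanding $\cosh$ and $\sinh$ in powers of $r$. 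On monomials, the composition $\mathcal P_i r^k\mathcal P_j^{-1}$ is a Riemann--Liouville-type operator with homogeneous kernel. After conjugation by $\sqrt r$ and the factors $e^{\pm\lambda r}$ extracted in \eqref{eq:transf}, the resulting kernels take the form $\sqrt{\yh/\xh}\,\xh^{n-1}\times$ a polynomial in $\yh/\xh$, which matches \eqref{eq:Kseries}. This computation also shows they are $O(\sqrt{\yh/\xh})$. The diagonal values \eqref{eq:Kdiag} are forced by the intertwining itself, as in the identity \eqref{eq:identA}, and can be read off from the $\delta$-function part of $\partial_r$. The kernel PDEs then follow from the intertwining by the argument run backwards in the proof of Theorem \ref{thm:rapid}: the three integral groups must vanish for all admissible data.

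\emph{Identification and the shift identities.} Third, by uniqueness of bounded solutions of \eqref{eq:Ksystem}, $\mathcal S_\lambda=\mathcal T_\lambda$. For the shift property \eqref{eq:shift}, I would compute on the string side. By \eqref{eq:PPderiv}, $\Psi_s$ corresponds to the string data $(v_t,v_r)=(\cosh sr,\sinh sr)$, up to the sign convention fixed by the $\pm$ in $\Psi_s$. The hyperbolic rotation by $\lambda r$ sends these to $(\cosh(s+\lambda)r,\sinh(s+\lambda)r)$ by the addition formulas, which gives $\mathcal T_\lambda\Psi_s=\Psi_{s+\lambda}$. On the string side the rotations compose additively, and conjugation by the fixed $\mathcal P_i$ preserves this. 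Hence $\mathcal T_\mu\mathcal T_\lambda=\mathcal T_{\lambda+\mu}$, and $\mathcal T_0=I$ gives the inverse. The relation $A\Psi_s=s\Psi_s$ is the Bessel recurrences $I_0'=I_1$ and $I_1'+I_1/r=I_0$ inserted into \eqref{eq:sys}.

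The main obstacle is the second step. One must justify that the composed Abel operators genuinely yield kernels with the $\sqrt{\yh/\xh}$ behaviour, not merely bounded ones, and that they act on the class $\mathcal X_0$. Without this, the uniqueness argument of Proposition \ref{prop:existence} does not apply. I would handle this through the explicit monomial computation, using Beta-function integrals, which should reproduce the recursion \eqref{eq:Rn}.
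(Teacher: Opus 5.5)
Your proposal follows the paper's own proof: rotate the string's velocity and strain, conjugate through \eqref{eq:PPderiv}, identify the result with $\mathcal T_\lambda$ by the uniqueness of Theorem~\ref{prop:existence}, and get the shift and group laws from the addition formulas on the string side. You also check the $\sqrt{\yh/\xh}$ branch condition, which the paper's appeal to uniqueness uses without verifying, and your monomial plan does this correctly: on $r^\mu$ the compositions act by ratios of Wallis integrals $c_\nu=\frac2\pi\int_0^{\pi/2}\sin^\nu\theta\,d\theta$ shifted by even amounts, which are rational in $\mu$ and give kernels in $(u_t,u_r)$ that are polynomials in $\yh/\xh$ without constant term.

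Two small corrections. First, the string damping is diagonal: $V_t=W_r-\lambda V$ and $W_t=V_r-\lambda W$. Second, the non-integral parts of $\mathcal P_0\cosh(\lambda r)\mathcal P_0^{-1}$ and $\mathcal P_0\sinh(\lambda r)\mathcal P_1^{-1}$ are exactly multiplication by $\cosh(\lambda r)$ and $\sinh(\lambda r)$ (already $\mathcal P_0(r\,g)=r\,\mathcal P_1 g$), and this is what supplies the factors $e^{\pm\lambda r}$ of \eqref{eq:transf}.
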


So $\gamma$ and $\delta$ are the velocity and the strain of the transformed state, and \eqref{eq:VWtoab} is the same combination that makes $\xib,\etab$ out of $u_t,u_r$ in \eqref{eq:riemanndef}. Nothing here is a further change of variables: \eqref{eq:expl} and \eqref{eq:transf} are one map written two ways, the first without kernels, so $\mathcal T_\lambda$ can be applied by quadrature.

It might appear from \eqref{eq:explA}--\eqref{eq:VWtoab} that backstepping was not even needed in order to arrive at the transformation to a desirable target. However, \eqref{eq:explA}--\eqref{eq:VWtoab} are less useful than they appear. They are not a control law: the force sits inside $\mathcal P_1^{-1}$, so it is implicit, and the Volterra form yields \eqref{eq:rapidfb}. And they are available only because this plant is the radial wave equation exactly; the kernel route does not depend on that coincidence, and is a design direction for coupled hyperbolic systems more broadly. What \eqref{eq:explA}--\eqref{eq:VWtoab} do give is the reading of the kernels: they are not an artifact of the method but the coordinate expression of an elementary similarity, which is what tells us the design is right rather than merely verified.

\emph{Proof of Proposition \ref{prop:explicit}.} On the string, multiplying the invariants $v_t\pm v_r$ by $e^{\pm\lambda r}$ produces the same string with $-\lambda$ added to both equations, and in the variables $(v_t,v_r)$ that multiplication is the matrix with entries $\cosh\lambda r$ on the diagonal and $\sinh\lambda r$ off it. Conjugating by $\mathcal P_0$ and $\mathcal P_1$ through \eqref{eq:PPderiv} gives \eqref{eq:expl}, and \eqref{eq:VWtoab} is \eqref{eq:riemanndef} applied to the transformed velocity and strain. The map so defined is identity-leading, triangular because $\mathcal P_0,\mathcal P_1$ and their inverses are, and it intertwines $A$ with $A-\lambda I$; by the uniqueness in Proposition \ref{prop:existence} its kernels are those of \eqref{eq:Ksystem}--\eqref{eq:Kdiag}, so it is $\mathcal T_\lambda$ and \eqref{eq:VWtoab} holds. For \eqref{eq:shift}, the state $\Psi_s$ has $u_t=I_0(sr)$ and $u_r=I_1(sr)$, so by \eqref{eq:PPderiv} its pullback is $\bigl(\cosh(sr),\sinh(sr)\bigr)$; the matrix adds $\lambda$ to the argument, and pushing forward gives $\bigl(I_0((s+\lambda)r),I_1((s+\lambda)r)\bigr)=\Psi_{s+\lambda}$. The group property is that of the matrix, and the inverse follows. \hfill$\square$

The factors in \eqref{eq:expl} are unbounded, $\mathcal P_0$ and $\mathcal P_1$ smoothing by half a derivative and their inverses losing it, but the half orders cancel in the conjugation: the composition is the bounded operator of Lemma \ref{lem:HS}, with the kernel bound \eqref{eq:Kbound}. 

So the backstepping transformation is a translation of the radial spectral parameter, $s\mapsto s+\lambda$, which is why the closed loop is similar to $A-\lambda I$ and not asymptotically similar to it, and why the construction is not a perturbation of pole placement.

\subsection{Trolley displacement and the augmented norm}\label{sec:displacement}

Two elementary bounds are used throughout. From $u(x)=u(1)-\int_x^1u_y$ and the weighted Cauchy--Schwarz inequality, $|u(x)-u(1)|^2\le\ln\frac1x\int_x^1y\,u_y^2\,dy$, and $\int_0^1\ln\frac1x\,dx=1$, so
\begin{equation}\label{eq:poincare}
\|u\|^2\le2u^2(1)+2\bigl\|\sqrt x\,u_x\bigr\|^2 ,
\end{equation}
and the same holds for $\psi$. From $w=x^{-1/2}\int_0^xu_t$ and Cauchy--Schwarz, $|w(x)|\le\|u_t\|_{L^2(0,x)}$ at every point, so $\|w\|\le\|u_t\|$, and the same bound holds for the companion of $\psi$.

The pair carried by $N_\psi$ is not the image of the state. Applying \eqref{eq:transf} to the state gives $(\omega_t,\psi_t)$, since $(w,u)_t=(\sqrt x\,u_x,u_t)$ and the transformation is time-invariant, whereas $N_\psi$ is built on $\bigl(\psi_t+\lambda\psi,\ \sqrt x\,\psi_x\bigr)$. By linearity and \eqref{eq:psiomega} the two are related by one shift:
\begin{equation}\label{eq:bridge}
\mathcal T_\lambda\bigl(\sqrt x\,u_x+\lambda w,\ u_t+\lambda u\bigr)=\bigl(\omega_t+\lambda\omega,\ \psi_t+\lambda\psi\bigr)=\bigl(\sqrt x\,\psi_x,\ \psi_t+\lambda\psi\bigr),
\end{equation}
so Lemma \ref{lem:HS} applies verbatim to the shifted state, with the same kernels and the same bound. The shifted and unshifted plant norms are not equivalent on their own, and the trolley displacement is what makes them so.

\begin{lemma}[Shifted state]\label{lem:shift}
With $N_u$ the norm \eqref{eq:normc0},
\begin{equation}\label{eq:shiftequiv}
C_{1}\,N_u^2\ \le\ \bigl\|\sqrt x\,u_x+\lambda w\bigr\|^2+\bigl\|u_t+\lambda u\bigr\|^2+c_0u^2(1)\ \le\ C_{2}\,N_u^2 .
\end{equation}
\end{lemma}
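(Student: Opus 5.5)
The upper bound in \eqref{eq:shiftequiv} is elementary. The lower bound is really a statement about inverting the shift, and I would reduce it to a boundary value problem for the modified Bessel operator. Write $N_u^2=\|\sqrt x\,u_x\|^2+\|u_t\|^2+c_0u^2(1)$, and let $M^2$ denote the middle quantity in \eqref{eq:shiftequiv}.

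\emph{Upper bound.} I would apply the triangle inequality to each shifted component. The Cauchy--Schwarz bound stated before the lemma gives $\|\lambda w\|\le\lambda\|u_t\|$. The weighted Poincar\'e inequality \eqref{eq:poincare} gives $\|\lambda u\|^2\le2\lambda^2u^2(1)+2\lambda^2\|\sqrt x\,u_x\|^2$. Collecting terms yields $M^2\le C_2N_u^2$ with $C_2$ depending only on $\lambda$ and $c_0$, namely on $1+\lambda^2$ and $\lambda^2/c_0$.

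\emph{Lower bound.} Set $f=\sqrt x\,u_x+\lambda w$ and $g=u_t+\lambda u$, and regard $(f,g,u(1))$ as data. The aim is to recover $u$, and with it $u_t=g-\lambda u$, with bounds. Substituting $u_t=g-\lambda u$ into \eqref{eq:wdef} gives $w=x^{-1/2}\int_0^x g-\lambda x^{-1/2}\int_0^x u$. Then $\sqrt x\,u_x=f-\lambda w$ becomes a first-order integro-differential relation for $u$ alone. Multiplying by $\sqrt x$ and differentiating in $x$ turns it into
\[
\bigl(x\,u_x\bigr)_x-\lambda^2u=\bigl(\sqrt x\,f\bigr)_x-\lambda g .
\]
This comes with the regularity condition $\lim_{x\to0^+}x\,u_x=0$ at the free end, since $\sqrt x\,f\to0$ and $\int_0^x g\to0$ there, and with the prescribed trace $u(1)$. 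In the coordinate $\xh=2\sqrt x$ the operator is the radial modified Bessel operator. Its solution regular at $\xh=0$ is $I_0(\lambda\xh)$, which is positive at $\xh=2$, so the Dirichlet problem at $x=1$ is uniquely solvable.

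I would write the solution as $u=u(1)\,I_0(2\lambda\sqrt x)/I_0(2\lambda)$ plus a Green's-function term built from $I_0$ and $K_0$ acting on the right-hand side. Because the right-hand side contains $(\sqrt x f)_x$, I would integrate that term by parts once onto the Green's function. The representation then involves only $f$ and $g$, not their derivatives.

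The remaining step is to show that this solution operator maps $(f,g,u(1))\in L^2\times L^2\times\mathbb R$ boundedly into the space normed by $\|\sqrt x\,u_x\|+\|u\|$. With that in hand, $\|u_t\|\le\|g\|+\lambda\|u\|$ gives $N_u\le C_1^{-1/2}M$.

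\emph{Main obstacle.} The obstacle is the weighted $L^2$ bound on the Green's operator near the free end, where $K_0$ has a logarithmic singularity and the weight $x$ degenerates. I would handle it by a Hardy/Schur-test estimate in $\xh$, using $dx=\tfrac12\xh\,d\xh$.

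If the explicit estimates prove messy, I would fall back on a Fredholm argument. The map $(\sqrt x\,u_x,u_t,u(1))\mapsto(f,g,u(1))$ is the identity plus $\lambda$ times the operators $u_t\mapsto w$ and $(u_x,u(1))\mapsto u$. The second is compact from the energy space into $L^2$ by the weighted Rellich property implicit in \eqref{eq:poincare}. The first is compact by the Hardy averaging structure, and I would check this. The map is injective, because $f=g=0$ and $u(1)=0$ force $u=cI_0(2\lambda\sqrt x)$ with $c=0$. The Fredholm alternative then supplies the bounded inverse and hence $C_1$.
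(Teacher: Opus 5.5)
Your upper bound and your fallback argument are the paper's proof. The paper writes $\mathcal B_\lambda\mathcal B_0^{-1}=I+\lambda\,\mathcal C\mathcal B_0^{-1}$ and proves compactness directly: the maps $\sqrt x\,u_x\mapsto u-u(1)$ and $u_t\mapsto w$ are Hilbert--Schmidt, with kernels $-s^{-1/2}\mathbf 1_{s>x}$ and $x^{-1/2}\mathbf 1_{s<x}$ and squared norms $\int_0^1\ln\frac1x\,dx=1$ and $1$, and the trace term has rank one. It then proves injectivity from $u_{rr}+u_r/r-\lambda^2u=0$, as you do, excluding $K_0$ by finite energy and $I_0$ by $u(1)=0$. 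The compactness check you defer is exactly that computation. It is not implied by \eqref{eq:poincare}, which is only a bound.

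Your primary route, which recasts the lower bound as a Dirichlet problem for the modified Bessel operator, is genuinely different. Its ``main obstacle'' disappears if you replace the Green's function by an a priori estimate on the given $u$. Multiply your integrated relation $x\,u_x=\sqrt x\,f-\lambda\int_0^x(g-\lambda u)$ by $u_x$ and integrate by parts once. The tip term vanishes, because $\bigl|\int_0^x u_t\bigr|\le\sqrt x\,\|u_t\|$ while $|u(x)|$ grows at most like $\sqrt{\ln(1/x)}$. Using $w(1)=\int_0^1(g-\lambda u)$, this gives
\[
\|\sqrt x\,u_x\|^2+\lambda^2\|u\|^2=\langle f,\sqrt x\,u_x\rangle+\lambda\langle g,u\rangle-\lambda\,u(1)\,w(1),\qquad |w(1)|\le\|g\|+\lambda\|u\| .
\]
Young's inequality then yields $\|\sqrt x\,u_x\|^2+\lambda^2\|u\|^2\le\|f\|^2+3\|g\|^2+3\lambda^2u^2(1)$. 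Combined with $\|u_t\|\le\|g\|+\lambda\|u\|$, the lower bound holds with $C_1^{-1}=\max\{8,\,1+6\lambda^2/c_0\}$, for every $\lambda>0$.

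That is what your route buys over the paper's. The Fredholm argument is shorter but gives no value for $C_1$. The energy identity gives an explicit constant, already contains injectivity, and needs neither the existence of a Green's function nor the identification of its solution with $u$.

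One slip to fix: for $f\in L^2$ the pointwise limit $\sqrt x\,f\to0$ is not available. State the tip condition in the integrated form above, or exclude the $K_0$ branch by $\sqrt x\,u_x\in L^2$, as the paper does.
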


\emph{Proof of Lemma \ref{lem:shift}.} Write $\mathcal B_0(w,u)=\bigl(\sqrt x\,u_x,\ u_t,\ \sqrt{c_0}\,u(1)\bigr)$ and $\mathcal B_\lambda=\mathcal B_0+\lambda\mathcal C$ with $\mathcal C(w,u)=(w,u,0)$, so that $\|\mathcal B_0\|^2=N_u^2$ and $\|\mathcal B_\lambda\|^2$ is the middle member of \eqref{eq:shiftequiv}; the upper bound is then $\|u\|\le C(\|\sqrt x\,u_x\|+|u(1)|)$ from \eqref{eq:poincare} together with $\|w\|\le\|u_t\|$.

For the lower bound, $\mathcal B_0$ is boundedly invertible, and while its inverse is not compact, the composition $\mathcal C\mathcal B_0^{-1}$ is: its two $L^2$-valued components are Hilbert--Schmidt integral operators and its trace component has finite rank. Indeed, write $(f,g,a)=\mathcal B_0(w,u)$. Then $u_x=f/\sqrt x$ integrated from the trolley, and $\bigl(\sqrt x\,w\bigr)_x=g$ with $\sqrt x\,w$ vanishing at the tip, so
\begin{equation}\label{eq:B0inv}
u(x)=\frac{a}{\sqrt{c_0}}-\int_x^1\frac{f(s)}{\sqrt s}\,ds,\qquad
w(x)=\frac{1}{\sqrt x}\int_0^x g(s)\,ds,
\end{equation}
and the two integral operators are Hilbert--Schmidt, their squared norms both equal to $1$, while the term carrying $u(1)$ has rank one. Hence $\mathcal B_\lambda\mathcal B_0^{-1}=I+\lambda\,\mathcal C\mathcal B_0^{-1}$ is a compact perturbation of the identity, Fredholm of index zero, and it suffices to show that it is injective.

Let $\mathcal B_\lambda(w,u)=0$. In the transit coordinate $r=2\sqrt x$, where $\sqrt x\,\partial_x=\partial_r$, the first two components read $u_r+\lambda w=0$ and $w_r+w/r+\lambda u=0$, so $w=-u_r/\lambda$ and
\begin{equation}\label{eq:shiftker}
u_{rr}+\frac1r\,u_r-\lambda^2u=0 ,
\end{equation}
whose solutions are $C_1I_0(\lambda r)+C_2K_0(\lambda r)$. The second is excluded because $K_0(\lambda r)\sim-\ln r$ gives $u_r\sim-1/r$ and $\int_0 u_r^2\,dx=\int_0 r^{-1}dr=\infty$, outside the energy space. The third component gives $u(1)=0$, that is $C_1I_0(2\lambda)=0$, and $I_0(2\lambda)>0$, so $C_1=0$ and $w=0$. \hfill$\square$

The kernel found in that proof is the reason the trolley displacement cannot be dispensed with: without its third component $\mathcal B_\lambda$ annihilates $u=C\,I_0(2\lambda\sqrt x)$, and the spring coordinate removes exactly that direction. What remains is to exchange the two trolley traces.

\begin{lemma}[Trolley traces]\label{lem:trace}
With $N_\psi^2=\bigl\|\sqrt x\,\psi_x\bigr\|^2+\bigl\|\psi_t+\lambda\psi\bigr\|^2+c_0\psi^2(1)$ and $\|\cdot\|$ the norm of $L^2(0,1)$,
\begin{equation}\label{eq:tracefwd}
|\psi(1)|\le C_\lambda\Bigl(\bigl\|\sqrt x\,u_x\bigr\|+\|u_t\|+|u(1)|\Bigr),\qquad
|u(1)|\le C_\lambda\,N_\psi .
\end{equation}
\end{lemma}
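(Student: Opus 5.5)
The plan is to read both traces off the two boundary rows of the transformation and its inverse. Each trace is an explicit combination of trolley values plus kernel integrals against the displacement Riemann variables. Those integrals are then controlled by the kernel bound $|K|\le C_\lambda\sqrt{\yh/\xh}$ together with the two elementary bounds \eqref{eq:poincare} and $\|w\|\le\|u_t\|$.

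\emph{Forward trace.} At $\xh=2$, \eqref{eq:psiomegadef} gives $\psi(1)=(\alpha_d-\beta_d)(2)/\sqrt2$. Written out through \eqref{eq:transfdisp} and \eqref{eq:riemanndisp}, this is exactly $\Psi$ of \eqref{eq:Psidef}:
\begin{equation*}
\psi(1)=\cosh(2\lambda)\,u(1)+\sinh(2\lambda)\int_0^1u_t\,dx+\frac{e^{-2\lambda}J_2-e^{2\lambda}J_1}{\sqrt2}.
\end{equation*}
\begin{itemize}
\item The first term is bounded by $|u(1)|$, and the second by $\|u_t\|$ via Cauchy--Schwarz.
\item For $J_1,J_2$, the kernel bound at $\xh=2$ gives $|K(2,\yh)|\le C_\lambda$, so Cauchy--Schwarz on $(0,2)$ gives $|J_i|\le C_\lambda\bigl(\|\xib_d\|_{L^2(0,2)}+\|\etab_d\|_{L^2(0,2)}\bigr)$.
\item The isometry computation of Lemma \ref{lem:riemann}, with $(w,u)$ in place of $(\sqrt x\,u_x,u_t)$, gives $\int_0^2(\xib_d^2+\etab_d^2)\,d\xh=\|w\|^2+\|u\|^2$.
\item Finally, $\|w\|\le\|u_t\|$, and \eqref{eq:poincare} gives $\|u\|^2\le2u^2(1)+2\|\sqrt x\,u_x\|^2$. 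This yields the first inequality of \eqref{eq:tracefwd}.
\end{itemize}

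\emph{Reverse trace.} I would use $\mathcal T_\lambda^{-1}=\mathcal T_{-\lambda}$ from \eqref{eq:shift}. This is again an exponential multiplier followed by a Volterra correction, with kernels given by \eqref{eq:Kseries} with $-\lambda$ in place of $\lambda$. Since the series converges for every $\lambda$, these kernels obey the same bound $|K|\le C_{\lambda}\sqrt{\yh/\xh}$.
\begin{itemize}
\item Apply $\mathcal T_{-\lambda}$ to $(\alpha_d,\beta_d)$ to recover $(\xib_d,\etab_d)$, and evaluate $(\xib_d-\etab_d)(2)/\sqrt2=u(1)$.
\item This gives $u(1)=\cosh(2\lambda)\psi(1)-\sinh(2\lambda)\,\omega\text{-average}+{}$kernel integrals of $(\alpha_d,\beta_d)$ over $(0,2)$. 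The middle term is obtained by the same bookkeeping as for $\Psi$, with signs flipped.
\item As before, it suffices to bound $\|\psi\|$ and $\|\omega\|$ in $L^2(0,1)$ by $N_\psi$. Indeed, $\int_0^2(\alpha_d^2+\beta_d^2)\,d\xh=\|\psi\|^2+\|\omega\|^2$ by the same isometry, and the average term is bounded by $\|\omega\|$-type quantities through Cauchy--Schwarz.
\item The bound on $\|\psi\|$ is \eqref{eq:poincare} applied to $\psi$: $\|\psi\|^2\le 2\psi^2(1)+2\|\sqrt x\,\psi_x\|^2\le C N_\psi^2$.
\item For $\omega$, the second equation of \eqref{eq:psiomega} reads $\psi_t+\lambda\psi=x^{-1/2}(\sqrt x\,\omega)_x$. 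Hence $\omega(x)=x^{-1/2}\int_0^x(\psi_t+\lambda\psi)\,ds$, and Cauchy--Schwarz gives $|\omega(x)|\le\|\psi_t+\lambda\psi\|$ pointwise. So $\omega$ plays the role of the companion $w$, and $\|\omega\|\le N_\psi$.
\end{itemize}

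\emph{Main obstacle.} The delicate step is justifying the integration constant $\sqrt x\,\omega\to0$ at the tip, i.e.\ that $\alpha_d+\beta_d=o(1)$ as $\xh\to0$. This should follow from the Volterra structure: near $\xh=0$ the integrals in \eqref{eq:transfdisp} are over $(0,\xh)$ with kernels $O(\sqrt{\yh/\xh})$, so they are $O(\xh)$ times an $L^2$ norm. Meanwhile $\xib_d+\etab_d=\sqrt2\,x^{1/4}w\to0$, because $\sqrt x\,w=\int_0^xu_t\to0$.

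A second point to check is that the trolley-row bookkeeping for $\mathcal T_{-\lambda}$ really reproduces the $\Psi$-type formula with $\lambda\to-\lambda$. I would verify this directly from \eqref{eq:shift}, since the multiplier $e^{\mp\lambda\xh}$ at $\xh=2$ produces precisely the $\cosh$/$\sinh$ mixing of $\alpha_d\pm\beta_d$.
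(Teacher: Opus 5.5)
Your proposal is correct and follows the paper's proof essentially step for step. The forward bound is the trolley row of \eqref{eq:transf} applied to $(w,u)$, which is $\Psi$, controlled by $|w(1)|,\|w\|\le\|u_t\|$, \eqref{eq:poincare} and the kernel bound; the reverse bound is the same row of $\mathcal T_{-\lambda}$, closed by \eqref{eq:poincare} for $\psi$ and by the pointwise bound $|\omega(x)|\le\|\psi_t+\lambda\psi\|$ on the target companion.

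There is one slip to fix. The second equation of \eqref{eq:psiomega} reads $\psi_t+\lambda\psi=(\sqrt x\,\omega)_x$, without the factor $x^{-1/2}$ you wrote, and it is this form that gives your formula $\omega=x^{-1/2}\int_0^x(\psi_t+\lambda\psi)\,ds$. Your check of the tip condition $\sqrt x\,\omega\to0$, which the paper only asserts, is a reasonable addition, and it needs only that $x^{1/4}(\alpha_d+\beta_d)\to0$.
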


\emph{Proof of Lemma \ref{lem:trace}.} Evaluating \eqref{eq:transf}, applied to $(w,u)$, at the trolley gives
\begin{equation}\label{eq:trolley}
\psi(1)=\cosh(2\lambda)\,u(1)+\sinh(2\lambda)\,w(1)
+\frac{e^{-2\lambda}I_2^d-e^{2\lambda}I_1^d}{\sqrt2},
\end{equation}
with $J_1,J_2$ as in \eqref{eq:Idef}; that is, $\psi(1,t)=\Psi(t)$ and likewise $\omega(1,t)=\Omega(t)$ of \eqref{eq:PsiOmega}. Three bounds close it. First, $w=p/\sqrt x$ with $p(x)=\int_0^xu_t$, so $|p(x)|\le\sqrt x\,\|u_t\|$ and hence $|w(x)|\le\|u_t\|$ for every $x$, giving both $|w(1)|\le\|u_t\|$ and $\|w\|\le\|u_t\|$. Second, \eqref{eq:poincare}. Third, the integrals in \eqref{eq:trolley} are bounded by $C_\lambda(\|u\|+\|w\|)$, the kernels being Hilbert--Schmidt by \eqref{eq:HScalc}. Together these give the first estimate.

For the second, apply \eqref{eq:trolley} at $-\lambda$, which is the inverse by Lemma \ref{lem:HS}, obtaining $u(1)$ from $\psi(1)$, from the companion $W$ of the target and from integrals of $\psi$ and $W$. Since $\psi_t+\lambda\psi=\bigl(\sqrt x\,W\bigr)_x$ and $\sqrt x\,W$ vanishes at the tip, $W(1)=\int_0^1(\psi_t+\lambda\psi)\,dx$, so $|W(1)|\le\|\psi_t+\lambda\psi\|$ and likewise $\|W\|\le\|\psi_t+\lambda\psi\|$; and \eqref{eq:poincare} applied to $\psi$ bounds $\|\psi\|$ by $\|\sqrt x\,\psi_x\|$ and $|\psi(1)|$. Each of these reaches a different member of $N_\psi$, so the estimate closes for every $\lambda>0$, with no smallness restriction. It is here that the shifted velocity is needed: with $\psi_t$ in place of $\psi_t+\lambda\psi$ the bounds return to their starting point and close only for small $\lambda$. \hfill$\square$

Lemmas \ref{lem:HS}, \ref{lem:shift} and \ref{lem:trace} together give
\begin{equation}\label{eq:normequiv}
N_u\ \asymp\ \bigl\|\sqrt x\,u_x+\lambda w\bigr\|+\bigl\|u_t+\lambda u\bigr\|+\sqrt{c_0}\,|u(1)|\ \asymp\ N_\psi ,
\end{equation}
and $N_\psi(t)=e^{-\lambda t}N_\psi(0)$ from $\psi=e^{-\lambda t}\varphi$ with $\varphi$ conserved by Lemma \ref{lem:skew}, which is \eqref{eq:ratedecay}. By \eqref{eq:poincare} the displacement itself decays with the state.

Two quantities in this construction decay at the rate $\lambda$, and they are not one functional in two coordinate systems. The first is \eqref{eq:tgtdecay}, in the Riemann variables of the target and its trace $\sigma$; the second is $N_\psi$, in the target displacement. The estimate \eqref{eq:ratedecay} rests on the second, through \eqref{eq:normequiv}.

\section{Kernels}\label{sec:kernels}

\subsection{Kernel well-posedness and transform invertibility}\label{sec:rapidkernels}

The characteristics of \eqref{eq:Kxx} and \eqref{eq:Kyy} are the lines $\xh-\yh=\text{const}$, which enter the triangle at the edge $\yh=0$ and run parallel to the diagonal, the diagonal being one of them; those of \eqref{eq:Kxy} and \eqref{eq:Kyx} are the lines $\xh+\yh=\text{const}$, which enter at the diagonal and reach the edge, so \eqref{eq:Kdiag} is the data for the second pair and the first pair is closed at the edge, where the coefficients are singular. Two facts govern that edge, and both follow from \eqref{eq:tilt}.

\begin{theorem}[Kernel well-posedness]\label{prop:existence}
The problem \eqref{eq:Ksystem}, \eqref{eq:Kdiag}, \eqref{eq:branchcond} has a unique solution on $0<\yh<\xh<2$, and it satisfies
\begin{equation}\label{eq:Kbound}
\bigl|K^{\xi\xi}\bigr|+\bigl|K^{\xi\eta}\bigr|+\bigl|K^{\eta\xi}\bigr|+\bigl|K^{\eta\eta}\bigr|\le C_\lambda\sqrt{\frac{\yh}{\xh}} .
\end{equation}
\end{theorem}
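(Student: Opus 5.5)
The plan is to strip off the Frobenius factor, solve for the remainder by the power series \eqref{eq:Kseries}, and prove uniqueness by an energy argument along characteristics. Concretely, write each kernel as $K=\sqrt{\yh/\xh}\,G$, with $G\in\{A,B,C,D\}$ for $K^{\xi\xi},K^{\xi\eta},K^{\eta\xi},K^{\eta\eta}$. The bound \eqref{eq:Kbound} then becomes plain boundedness of $A,B,C,D$ on the triangle. Substituting into \eqref{eq:Kxx}--\eqref{eq:Kxy} and using
\begin{equation*}
(\partial_{\xh}\pm\partial_{\yh})\sqrt{\yh/\xh}=\sqrt{\yh/\xh}\Bigl(-\tfrac{1}{2\xh}\pm\tfrac{1}{2\yh}\Bigr)
\end{equation*}
gives the reduced system
\begin{equation*}
A_{\xh}+A_{\yh}=-\frac{A+B}{2\yh}+\frac{A}{2\xh}-p\,C,\qquad
B_{\xh}-B_{\yh}=\frac{A+B}{2\yh}+\frac{B}{2\xh}-p\,D,
\end{equation*}
and the analogous pair for $(C,D)$ with $q$ in place of $p$. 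The factor $\sqrt{\yh}$ has therefore absorbed the non-integrable singularity into the single combinations $A+B$ and $C+D$. The branch condition \eqref{eq:branchcond} is what selects the regular branch. I will read it as boundedness of $G$ at $\yh=0$, equivalently $A+B,\ C+D=O(\yh)$ there. The other Frobenius exponent corresponds to $K\sim\yh^{-1/2}$ and is excluded by that requirement.

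\emph{Existence.} I insert the ansatz $G=\sum_{n\ge1}\lambda^n\xh^{\,n-1}g_n(\sigma)$ with $\sigma=\yh/\xh$, and expand $p$, $q$ and the diagonal data \eqref{eq:Kdiag} in powers of $\lambda\xh$. Matching powers turns the PDEs into first-order ODEs in $\sigma$ for $(a_n,b_n,c_n,d_n)$. Their forcing involves only lower orders, through the Taylor coefficients $(\mp2)^k/(2\,k!)$ of $p,q$. The data are the values at $\sigma=1$ coming from \eqref{eq:Kdiag}, and the branch condition is imposed at $\sigma=0$.

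At each order the homogeneous operator has a regular singular point at $\sigma=0$ with exponents $0$ and $-1$ in $A+B$. Choosing the exponent-$0$ branch fixes the one free constant. Working degree by degree, the solution is a polynomial of degree at most $n-1$, and this reproduces \eqref{eq:Klead} as a check.

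For convergence, I will prove by induction a bound $\sup_{[0,1]}|g_n|\le M^n/\Gamma(n/2+1)$, or some similar factorial decay. The gain comes from integrating the lower-order forcing once per order in the Volterra form of each ODE, and from the $1/k!$ in the coefficients of $p,q$. This makes the series entire in $\lambda$ and uniformly convergent on $0\le\sigma\le1$, $\xh\le2$, which gives existence together with \eqref{eq:Kbound}. As a fallback I would instead write the reduced system as integral equations along the two characteristic families: the lines $\xh-\yh=$const starting at the edge, and the lines $\xh+\yh=$const starting at the diagonal. I would then run successive approximations in a sup norm weighted by $e^{-M\xh}$.

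\emph{Uniqueness.} Take the difference of two solutions. It solves the homogeneous reduced system with zero diagonal data and satisfies the branch condition. I will multiply by $G$ and integrate over the subtriangle $\{\yh<\xh\le X\}$, using the weight $e^{-M\xh}$. The singular terms combine into $-(A+B)^2/(2\yh)$ and $-(C+D)^2/(2\yh)$, which have a favourable sign. The boundary term at $\yh=0$ vanishes by the branch condition, and the diagonal term vanishes because the data are zero. Gronwall's inequality in $X$ then gives $G\equiv0$.

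\emph{Main obstacle.} The main obstacle is the convergence estimate at the singular edge. The ODE at order $n$ has coefficients of size $1/\sigma$, so I must show that the regular-branch solution operator is bounded uniformly in $n$ with a gain of $1/n$. To do this I will write it as $\sigma^{-1}\int_0^\sigma(\cdot)$ for the combination $A+B$, which is bounded on $C[0,1]$ with norm $1$, in the spirit of the Hardy-type estimates behind Lemma \ref{lem:shift}. I will then check that the transport part along $\sigma$ contributes the factor $1/n$ that comes from the weight $\xh^{\,n-1}$.
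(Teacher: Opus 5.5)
Your outline is the paper's: strip $\sqrt{\yh/\xh}$, expand in $\lambda\xh$ with polynomial coefficients in $\sigma$, select the regular branch at $\sigma=0$, and prove uniqueness by an energy estimate. Your reduced system is also correct. But the two estimates that carry the proof misread how the singular terms behave.

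\textbf{Uniqueness.} Multiplying your reduced equations by $A,B$ (resp.\ $C,D$) gives
\[
-\frac{A(A+B)}{2\yh}+\frac{B(A+B)}{2\yh}=-\frac{(A+B)(A-B)}{2\yh},\qquad \frac{(C+D)(C-D)}{2\yh},
\]
not $-(A+B)^2/(2\yh)$ and $-(C+D)^2/(2\yh)$. No positive multipliers $\mu A,\nu B$ make these sign-definite, since $-(A+B)(\mu A-\nu B)$ equals $\nu B^2>0$ at $A=0$. Because of the factor $1/\yh$, they are also not bounded by any unweighted $L^2$ energy, so your Gronwall argument does not close.

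The actual mechanism is a cancellation, not dissipation. In \eqref{eq:Kxx}--\eqref{eq:Kxy} the singular coupling is skew, so multiplying by $K^{\xi\xi}$ and $K^{\xi\eta}$ removes it. Dividing by $\sqrt{\yh/\xh}$ turns it into the indefinite terms above, and it is recovered only with the weight $\yh$. With multipliers $\yh A,\yh B$, the weight passing through $\partial_{\xh}\pm\partial_{\yh}$ produces $-\tfrac12A^2+\tfrac12B^2$, which cancels $-\tfrac12(A^2-B^2)$ exactly. This is the weight $\sigma$ of \eqref{eq:Edef} and the combination \eqref{eq:sigcancel}.

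A second problem: the coefficients $1/(2\xh)$ in your reduced system are singular and scale-invariant at the corner, so an $e^{-M\xh}$ weight cannot make them small there. The paper instead measures \eqref{eq:Edef} on the scaled kernels $k(z,\sigma)$, where these coefficients become the bounded factor $1+\cosh2z$ of \eqref{eq:Egron}.

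\textbf{Convergence.} The same point defeats your sup-norm mechanism. At order $n$, the coefficient $1/(2\sigma)$ is at least comparable to the damping $n-\tfrac32$ in the layer $\sigma\lesssim1/n$. Forming the equation for $s=a_n+b_n$ gives $s'+s/\sigma=\sigma(a_n-b_n)'-(n-\tfrac32)(a_n-b_n)+\dots$. Bounding $s$ through $\sigma^{-1}\int_0^\sigma$ then yields only $|s|\lesssim n\sigma\|a_n-b_n\|_\infty$, plus a derivative term. That is no gain inside the layer and a loss of $n$ outside it, so the uniform $1/n$ gain you need does not follow from what you propose.

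The paper obtains the gain by multiplying \eqref{eq:Rn} by $\sigma a_n,\sigma b_n,\sigma c_n,\sigma d_n$. The singular terms cancel against the unweighted terms produced by integrating $\sigma(1\mp\sigma)k_nk_n'$ by parts. What remains is $(n-\tfrac12)\int_0^1\sigma|k_n|^2d\sigma$ against $b_n(1)^2+c_n(1)^2$, the bounded $\tfrac12$-couplings and the lower orders, which is \eqref{eq:coeffrec}. Sup norms and the quotients $(a_n+b_n)/\sigma$ then follow from Markov-type inequalities for polynomials. Their cost is polynomial in $n$, which leaves the radius infinite, \eqref{eq:divdiff}.

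\textbf{Fallback.} Successive approximation along characteristics in an $e^{-M\xh}$-weighted sup norm fails for the same reasons. The iterates need not satisfy $A+B=O(\yh)$, so $\int d\yh/\yh$ diverges along the characteristics leaving the edge. And the corner coefficients are not small in any exponential weight.
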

This is the well-posedness assumed in Theorem \ref{thm:rapid}. Its proof is given after Lemma \ref{lem:cancel}, which supplies the property of the diagonal that the proof needs.

\begin{lemma}[Diagonal cancellation]\label{lem:cancel}
The right sides of \eqref{eq:Kxx} and \eqref{eq:Kyy}, evaluated on the diagonal with the data \eqref{eq:Kdiag}, vanish identically in $\xh$ and $\lambda$. Consequently $K^{\xi\xi}$ and $K^{\eta\eta}$ are constant along the diagonal, and
\begin{equation}\label{eq:diagconst}
K^{\xi\xi}(\xh,\xh)=\frac{\lambda}{2},\qquad K^{\eta\eta}(\xh,\xh)=-\frac{\lambda}{2}.
\end{equation}
\end{lemma}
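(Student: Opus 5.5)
The first claim is a direct computation, and the constancy and the values follow from it together with the characteristic structure described before Theorem~\ref{prop:existence}. I would substitute the diagonal data \eqref{eq:Kdiag} into the right side of \eqref{eq:Kxx} at $\yh=\xh$. This gives
\begin{equation*}
-\frac{1}{2\xh}\cdot\frac{e^{-2\lambda\xh}-1}{4\xh}-\frac{e^{-2\lambda\xh}}{2\xh}\cdot\frac{e^{2\lambda\xh}-1}{4\xh}
=-\frac{1}{8\xh^2}\Bigl[e^{-2\lambda\xh}-1+1-e^{-2\lambda\xh}\Bigr]=0 .
\end{equation*}
In the same way, the right side of \eqref{eq:Kyy} becomes $-\frac{1}{8\xh^2}\bigl[e^{2\lambda\xh}-1+1-e^{2\lambda\xh}\bigr]=0$. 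The cancellation is exactly the product identity \eqref{eq:tilt}: the tilted couplings $p,q$ multiply the data of the opposite kernel, and the exponentials $e^{\mp2\lambda\xh}$ are undone. This holds for every $\xh$ and every $\lambda$.

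Next I would use that the diagonal is itself a characteristic of \eqref{eq:Kxx} and \eqref{eq:Kyy}. Assume the kernels are $C^1$ up to the diagonal on $0<\xh<2$, which is part of the solution furnished by Theorem~\ref{prop:existence}, or follows directly from the series \eqref{eq:Kseries}, a series of polynomials in $\sigma$ converging uniformly with its derivatives. Then
\begin{equation*}
\frac{d}{d\xh}K^{\xi\xi}(\xh,\xh)=\bigl(K^{\xi\xi}_{\xh}+K^{\xi\xi}_{\yh}\bigr)(\xh,\xh)=0,
\end{equation*}
and likewise for $K^{\eta\eta}$. Hence both kernels are constant on $\{\yh=\xh\}$.

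The remaining step, and the only delicate one, is to identify the constants. The equations on the diagonal do not fix them; they are set by the behaviour at the corner $\xh=\yh=0$, where the coefficients are singular. The bound \eqref{eq:Kbound} only says that the diagonal traces are bounded, not what they are. I would read them off the series \eqref{eq:Kseries} at $\sigma=1$:
\begin{equation*}
K^{\xi\xi}(\xh,\xh)=\sum_{n\ge1}\lambda^n\xh^{\,n-1}a_n(1).
\end{equation*}
Constancy in $\xh$ forces $a_n(1)=0$ for every $n\ge2$, consistent with $a_2(1)=0$ from \eqref{eq:Klead}. The constant is therefore $\lambda a_1=\lambda/2$, and similarly $\lambda d_1=-\lambda/2$ for $K^{\eta\eta}$.

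To avoid depending on the recursion, I would instead cross-check the constant through a limiting argument. Let $\xh\to0$ along the diagonal in \eqref{eq:Kxy}: expanding the data gives $K^{\xi\eta}(\xh,\xh)=-\lambda/2+O(\xh)$ and $K^{\eta\xi}(\xh,\xh)=\lambda/2+O(\xh)$. The regular Frobenius branch at the edge, with all four kernels of the form $\sqrt{\yh/\xh}$ times a function regular at the corner, then ties the leading corner values of $K^{\xi\xi}$ and $K^{\eta\eta}$ to these. Concretely, the leading order $n=1$ of the recursion gives $a_1=-b_1$ and $d_1=-c_1$, which yields $\lambda/2$ and $-\lambda/2$.
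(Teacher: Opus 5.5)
Your proposal is correct and follows the paper's proof. You use the same diagonal computation showing that the right sides of \eqref{eq:Kxx} and \eqref{eq:Kyy} vanish, the same characteristic argument for constancy along $\yh=\xh$, and the same identification of the constants from the leading corner behaviour: the paper evaluates \eqref{eq:Kleadb} at $\sigma=1$, and you use the equivalent $n=1$ terms $\lambda a_1=\lambda/2$ and $\lambda d_1=-\lambda/2$ of \eqref{eq:Kseries}. Your explicit $C^1$-up-to-the-diagonal assumption and the check $a_1=-b_1(1)=\tfrac12$, $d_1=-c_1(1)=-\tfrac12$ are harmless additions, though the latter uses the recursion, contrary to your stated aim of avoiding it.
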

\emph{Proof.} With \eqref{eq:Kdiag}, the right side of \eqref{eq:Kxx} on $\yh=\xh$ is
\begin{equation}\label{eq:cancelcalc}
-\frac{1}{2\xh}\cdot\frac{e^{-2\lambda\xh}-1}{4\xh}-\frac{e^{-2\lambda\xh}}{2\xh}\cdot\frac{e^{2\lambda\xh}-1}{4\xh}
=-\frac{1}{8\xh^2}\Bigl[\bigl(e^{-2\lambda\xh}-1\bigr)+\bigl(1-e^{-2\lambda\xh}\bigr)\Bigr]=0,
\end{equation}
and the right side of \eqref{eq:Kyy} on the diagonal vanishes by the same computation with $e^{2\lambda\xh}$ in place of $e^{-2\lambda\xh}$. The diagonal is a characteristic of \eqref{eq:Kxx} and of \eqref{eq:Kyy}, whose transport direction is $\partial_{\xh}+\partial_{\yh}$, so $\frac{d}{d\xh}K^{\xi\xi}(\xh,\xh)=0$ and $\frac{d}{d\xh}K^{\eta\eta}(\xh,\xh)=0$; evaluating \eqref{eq:Kleadb} at $\sigma=1$ gives the constants \eqref{eq:diagconst}. \hfill$\square$

The diagonal is itself a characteristic of \eqref{eq:Kxx}, so without this cancellation the transport of $K^{\xi\xi}$ along it would carry $\int_0 d\yh/\yh$ and diverge at the corner.

Nothing is prescribed at $\yh=0$, and nothing can be: the coefficients $1/2\yh$ are infinite there, and near that end the solutions of \eqref{eq:Ksystem} are combinations of $\yh^{1/2}$ and $\yh^{-1/2}$. Requiring the kernels to be bounded discards the second, and that requirement closes the problem in place of a boundary condition. On the bounded solutions, with $\sigma=\yh/\xh$,
\begin{equation}\label{eq:Kleadb}
K^{\xi\xi}=K^{\eta\xi}=\ \ \frac{\lambda}{2}\sqrt\sigma+O(\lambda^2\xh),\qquad
K^{\xi\eta}=K^{\eta\eta}=-\frac{\lambda}{2}\sqrt\sigma+O(\lambda^2\xh),
\end{equation}

The scaled kernels are the natural unknowns. Writing
\begin{equation}\label{eq:scaledk}
K^{\xi\xi}=\frac{\sqrt\sigma}{\xh}\,k^{\xi\xi}(z,\sigma),\qquad z=\lambda\xh,\quad\sigma=\frac{\yh}{\xh},
\end{equation}
and likewise for the other three, boundedness at $\yh=0$ is the pair of conditions
\begin{equation}\label{eq:branchcond}
k^{\xi\xi}(z,0)+k^{\xi\eta}(z,0)=0,\qquad k^{\eta\xi}(z,0)+k^{\eta\eta}(z,0)=0,
\end{equation}
since it is the sum in each pair that multiplies $\yh^{-1/2}$. 

\emph{Proof of Theorem \ref{prop:existence}.} Expand $k=\sum_{n\ge1}z^nk_n(\sigma)$ for each of the four scaled kernels \eqref{eq:scaledk}. At order $n$ the equations reduce to a two-point problem in $\sigma$ whose homogeneous equation is hypergeometric with terminating solutions, so $k_n$ is a polynomial of degree at most $n-1$, determined uniquely by regularity at $\sigma=0$ and by \eqref{eq:Kdiag}, and the derivative $z\partial_z$ contributes $n$ to its coefficient. Measuring in the weight of \eqref{eq:Edef} and using the cancellation \eqref{eq:sigcancel} at each order gives
\begin{equation}\label{eq:coeffrec}
a_n:=\|k_n\|_w\le\frac{C}{n}\sum_{j=1}^{n-1}\frac{2^j}{j!}\,a_{n-j}+C\,\frac{2^n}{n!},\qquad n\ge n_0,
\end{equation}
the sum coming from the Taylor coefficients of $e^{\pm2z}$ in \eqref{eq:Ksystem} and the last term from \eqref{eq:Kdiag}. Fix $R>0$ and suppose $a_mR^m\le M_R$ for $m<n$. Then \eqref{eq:coeffrec} gives $a_nR^n\le C(e^{2R}-1)M_R/n+Ce^{2R}$, so for $n\ge2C(e^{2R}-1)$ the first term is at most $M_R/2$, and enlarging $M_R$ to cover the finitely many earlier coefficients gives $a_n\le M_RR^{-n}$ for every $n$. Since $R$ is arbitrary, the series has infinite radius in $z$, and in particular converges on $0\le z\le2\lambda$.

Written out, with $a_n,b_n,c_n,d_n$ the coefficients of the four scaled kernels \eqref{eq:scaledk}, the order-$n$ problem is
\begin{subequations}\label{eq:Rn}
\begin{eqnarray}
(1-\sigma)a_n'+\bigl(n-\tfrac32\bigr)a_n+\frac{a_n+b_n}{2\sigma}+\tfrac12c_n
&=&-\tfrac12\sum_{j=1}^{n-1}\frac{(-2)^j}{j!}\,c_{n-j},\label{eq:Rna}\\
-(1+\sigma)b_n'+\bigl(n-\tfrac32\bigr)b_n-\frac{a_n+b_n}{2\sigma}+\tfrac12d_n
&=&-\tfrac12\sum_{j=1}^{n-1}\frac{(-2)^j}{j!}\,d_{n-j},\label{eq:Rnb}\\
-(1+\sigma)c_n'+\bigl(n-\tfrac32\bigr)c_n-\frac{c_n+d_n}{2\sigma}+\tfrac12a_n
&=&-\tfrac12\sum_{j=1}^{n-1}\frac{2^j}{j!}\,a_{n-j},\label{eq:Rnc}\\
(1-\sigma)d_n'+\bigl(n-\tfrac32\bigr)d_n+\frac{c_n+d_n}{2\sigma}+\tfrac12b_n
&=&-\tfrac12\sum_{j=1}^{n-1}\frac{2^j}{j!}\,b_{n-j},\label{eq:Rnd}
\end{eqnarray}
\end{subequations}
with $a_n(0)+b_n(0)=0$ and $c_n(0)+d_n(0)=0$ from \eqref{eq:branchcond}, and
\begin{equation}\label{eq:Rndiag}
b_n(1)=\frac{(-2)^n}{4\,n!},\qquad c_n(1)=\frac{2^n}{4\,n!}
\end{equation}
from \eqref{eq:Kdiag}. At $n=1$ the sums are empty and \eqref{eq:Rn}--\eqref{eq:Rndiag} give the constants $a_1=c_1=\tfrac12$, $b_1=d_1=-\tfrac12$ of \eqref{eq:Klead}; each further order is one linear two-point problem in $\sigma$ with the preceding orders as data.

The singular terms of \eqref{eq:Ksystem} converge with it. For a polynomial of degree at most $n$ on $(0,1)$, $\|p\|_\infty\le C(n+1)^2\|p\|_w$ and $\|p'\|_\infty\le Cn^2\|p\|_\infty$, so $\|k_n\|_\infty\le Cn^2M_RR^{-n}$; and since branch selection gives $\bigl(k_n^{\xi\xi}+k_n^{\xi\eta}\bigr)(0)=0$, the mean value theorem and the second inequality give
\begin{equation}\label{eq:divdiff}
\Bigl\|\frac{k_n^{\xi\xi}+k_n^{\xi\eta}}{\sigma}\Bigr\|_\infty\le Cn^4M_RR^{-n},
\end{equation}
and the same for the other pair. The polynomial factors do not change the radius, so those series converge uniformly on $0\le z\le2\lambda$ as well, and the sum solves \eqref{eq:Ksystem} including its singular terms, with \eqref{eq:branchcond} and \eqref{eq:Kdiag}. Because the series begins at $n=1$, $k=O(z)$ uniformly, and \eqref{eq:scaledk} gives \eqref{eq:Kbound}, the factor $z=\lambda\xh$ cancelling the $1/\xh$. Uniqueness follows from \eqref{eq:Egron} applied to the difference of two solutions with vanishing data. \hfill$\square$

The weight $\sigma$ removes the singular terms from the energy balance, which is what gives uniqueness: the two terms carrying $1/2\sigma$ combine into
\begin{equation}\label{eq:sigcancel}
-\frac12\int_0^1\bigl(k^{\xi\xi}+k^{\xi\eta}\bigr)\bigl(k^{\xi\xi}-k^{\xi\eta}\bigr)d\sigma
\quad\text{and}\quad
\frac12\int_0^1\bigl(k^{\eta\xi}+k^{\eta\eta}\bigr)\bigl(k^{\eta\xi}-k^{\eta\eta}\bigr)d\sigma,
\end{equation}
in which no $1/\sigma$ survives, where
\begin{equation}\label{eq:Edef}
E(\rho)=\frac12\int_0^1\sigma\Bigl(|k^{\xi\xi}|^2+|k^{\xi\eta}|^2+|k^{\eta\xi}|^2+|k^{\eta\eta}|^2\Bigr)d\sigma,
\end{equation}
the transport terms carrying $\sigma(1\mp\sigma)$, which vanishes at $\sigma=0$ in both cases and at $\sigma=1$ only for the first, so that the traces of the other pair remain and
\begin{equation}\label{eq:Egron}
E'(\rho)\le\bigl[1+\cosh 2z\bigr]E(\rho)+\bigl|k^{\xi\eta}(\rho,1)\bigr|^2+\bigl|k^{\eta\xi}(\rho,1)\bigr|^2 .
\end{equation}
Existence is by the expansion in powers of $z$, whose coefficients are polynomials in $\sigma$ and whose radius is infinite, so no continuation in $z$ is needed.

\begin{lemma}[Transformation]\label{lem:HS}
Under \eqref{eq:Kbound} the Volterra part $\mathcal K_\lambda$ of \eqref{eq:transf} is Hilbert--Schmidt on $L^2(0,2)^2$, which by \eqref{eq:isometry} is the energy space, and the multiplier $M_\lambda$ is bounded with bounded inverse there, so $\mathcal T_\lambda=M_\lambda\bigl(I-\mathcal K_\lambda\bigr)$ is bounded. Writing $\mathcal T_\lambda$ for the whole of \eqref{eq:transf},
\begin{equation}\label{eq:inverse}
\mathcal T_\lambda^{-1}=\mathcal T_{-\lambda},
\end{equation}
so $\mathcal T_\lambda$ is boundedly invertible on the energy space and $M_\lambda=\|\mathcal T_\lambda\|\,\|\mathcal T_{-\lambda}\|$.
\end{lemma}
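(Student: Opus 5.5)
The plan has two parts, boundedness and the inverse. For boundedness we estimate the Hilbert--Schmidt norm of $\mathcal K_\lambda$ directly from \eqref{eq:Kbound}, and we dispose of the multiplier separately. On the triangle $0<\yh<\xh<2$,
\begin{equation*}
\int_0^2\!\!\int_0^{\xh}\Bigl(|K^{\xi\xi}|^2+|K^{\xi\eta}|^2+|K^{\eta\xi}|^2+|K^{\eta\eta}|^2\Bigr)d\yh\,d\xh\le C_\lambda^2\int_0^2\!\!\int_0^{\xh}\frac{\yh}{\xh}\,d\yh\,d\xh=C_\lambda^2 ,
\end{equation*}
so $\mathcal K_\lambda$ is Hilbert--Schmidt on $L^2(0,2)^2$ with norm at most $C_\lambda$. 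The multiplier $M_\lambda=\operatorname{diag}(e^{\lambda\xh},e^{-\lambda\xh})$ has entries in $[e^{-2\lambda},e^{2\lambda}]$, and its inverse is $M_{-\lambda}$. By \eqref{eq:isometry} this $L^2$ space is exactly the energy space, so $\mathcal T_\lambda=M_\lambda(I-\mathcal K_\lambda)$ is bounded there. Theorem \ref{prop:existence} holds for every real $\lambda$: the series in $z=\lambda\xh$ has infinite radius, so the same bound, with constant $C_{|\lambda|}$, also gives boundedness of $\mathcal T_{-\lambda}$.

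For \eqref{eq:inverse} the first approach is through the intertwining. By the computation \eqref{eq:identA}, $\mathcal T_\lambda$ maps solutions of \eqref{eq:sys} to solutions of \eqref{eq:tgt}, that is, $\mathcal T_\lambda A=(A-\lambda I)\mathcal T_\lambda$ on smooth states satisfying \eqref{eq:tipbehavior}. Since the equations of \eqref{eq:tgt} are \eqref{eq:sys} shifted by $-\lambda$, the map $\mathcal T_{-\lambda}$ takes them back, and the composition $\mathcal T_{-\lambda}\mathcal T_\lambda$ is again a transformation of the form \eqref{eq:transf} with exponent zero. Volterra operators compose to Volterra operators, and the multipliers combine because $e^{-\lambda\xh}e^{\lambda\xh}=1$. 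The composition intertwines $A$ with itself, and its kernel satisfies \eqref{eq:Ksystem} at $\lambda=0$ with diagonal data $(e^{0}-1)/(4\xh)=0$ and the branch condition \eqref{eq:branchcond}, the latter inherited because compositions of kernels $O(\sqrt{\yh/\xh})$ stay $O(\sqrt{\yh/\xh})$. By the uniqueness part of Theorem \ref{prop:existence}, which rests on \eqref{eq:Egron} with vanishing traces, this kernel is zero. Hence $\mathcal T_{-\lambda}\mathcal T_\lambda=I$, and by symmetry $\mathcal T_\lambda\mathcal T_{-\lambda}=I$. A Volterra operator $I-\mathcal K$ with square-integrable kernel is in any case injective with a Neumann-series inverse, so the only content here is the identification of that inverse with $\mathcal T_{-\lambda}$.

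The main obstacle is the step where the composed kernel is shown to solve the $\lambda=0$ kernel problem. The composition must be differentiated under the integral, and the singular coefficients $1/(2\yh)$ must be shown not to generate boundary terms at $\yh=0$. This needs the $\sqrt{\yh}$ decay of both factors, in the same way the limit term in \eqref{eq:identA} was killed. If this gets messy, the fallback is Proposition \ref{prop:explicit}: there $\mathcal T_\lambda$ is the conjugate of the matrix $\bigl(\begin{smallmatrix}\cosh\lambda r&\sinh\lambda r\\ \sinh\lambda r&\cosh\lambda r\end{smallmatrix}\bigr)$ by $\mathcal P_0,\mathcal P_1$, so \eqref{eq:shift} gives $\mathcal T_{-\lambda}\mathcal T_\lambda=\mathcal T_0=I$ at once on a dense set. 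Boundedness of both operators from the first paragraph then extends the identity to the whole energy space. Finally, $M_{\lambda,c_0}$ in \eqref{eq:ratedecay} is read off as the product of the two operator norms, as stated.
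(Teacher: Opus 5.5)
Your proposal is correct and follows the paper's proof: the Hilbert--Schmidt bound is \eqref{eq:HScalc}, and the inverse comes from showing that $\mathcal T_{-\lambda}\mathcal T_\lambda$ is the identity minus a Volterra operator commuting with $A$, whose kernels solve \eqref{eq:Ksystem} at $\lambda=0$ with vanishing diagonal data and therefore vanish by the uniqueness in Theorem \ref{prop:existence}. Your extra remarks (the $O(\sqrt{\yh/\xh})$ bound surviving composition and conjugation by the multiplier, and the right inverse obtained by exchanging $\lambda$ and $-\lambda$) fill in details the paper leaves implicit; however, your closing identification of $M_{\lambda,c_0}$ in \eqref{eq:ratedecay} with $\|\mathcal T_\lambda\|\,\|\mathcal T_{-\lambda}\|$ is too strong, since that constant also absorbs the norm equivalences of Lemmas \ref{lem:shift} and \ref{lem:trace}.
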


\emph{Proof.} By \eqref{eq:Kbound},
\begin{equation}\label{eq:HScalc}
\int_0^2\!\!\int_0^{\xh}\bigl|K(\xh,\yh)\bigr|^2d\yh\,d\xh\le C_\lambda^2\int_0^2\!\!\int_0^{\xh}\frac{\yh}{\xh}\,d\yh\,d\xh=C_\lambda^2\int_0^2\frac{\xh}{2}\,d\xh<\infty
\end{equation}
for each of the four kernels. For \eqref{eq:inverse}, let $A$ denote the spatial operator of \eqref{eq:sys}. Theorem \ref{thm:rapid} is the statement $\mathcal T_\lambda A=(A-\lambda I)\mathcal T_\lambda$, and the same construction with $-\lambda$ gives $\mathcal T_{-\lambda}(A-\lambda I)=A\,\mathcal T_{-\lambda}$, so $S=\mathcal T_{-\lambda}\mathcal T_\lambda$ commutes with $A$. Since $\bigl(I-\mathcal K_{-\lambda}\bigr)M_\lambda=M_\lambda\bigl(I-M_\lambda^{-1}\mathcal K_{-\lambda}M_\lambda\bigr)$ and $M_{-\lambda}M_\lambda=I$, $S$ is again of the form $I$ minus a Volterra operator, and it maps the plant to itself, so its kernels solve \eqref{eq:Ksystem} with $\lambda=0$, for which \eqref{eq:Kdiag} vanishes; by the uniqueness in Proposition \ref{prop:existence} those kernels are zero and $S=I$. \hfill$\square$

The transformation acts on the energy space itself, with no weight and no loss of derivatives.

\subsection{All kernels obtained from a single scalar function}\label{sec:onekernel}

The four kernels are one. Two of them are the first two with the rate reversed,
\begin{equation}\label{eq:symmlam}
K^{\eta\eta}_\lambda=K^{\xi\xi}_{-\lambda},\qquad K^{\eta\xi}_\lambda=K^{\xi\eta}_{-\lambda},
\end{equation}
as $\mathcal T_\lambda^{-1}=\mathcal T_{-\lambda}$ requires. The remaining relation is a reflection in the second argument, which leaves the triangle and is therefore written on the scaled kernel. Putting
\begin{equation}\label{eq:scalara}
K^{\xi\xi}(\xh,\yh)=\frac{\sqrt\sigma}{\xh}\,a(z,\sigma),\qquad z=\lambda\xh,\quad\sigma=\frac{\yh}{\xh},
\end{equation}
one has
\begin{equation}\label{eq:symm}
K^{\xi\eta}=-\frac{\sqrt\sigma}{\xh}\,a(z,-\sigma),
\end{equation}
so that all four follow from $a$ alone. Its series begins
\begin{equation}\label{eq:aseries}
a(z,\sigma)=\frac{z}{2}+\frac{z^2}{4}(\sigma-1)+\frac{z^3}{24}(\sigma-1)(3\sigma-1)+O(z^4),
\end{equation}
its value $z/2$ at $\sigma=1$ being the constancy \eqref{eq:diagconst} on the diagonal, and it is entire in $z$ and polynomial in $\sigma$ at each order.

The gains inherit the reduction. Writing $a_e$ and $a_o$ for the even and odd parts in $\sigma$ of the scalar kernel $a$ of \eqref{eq:scalara}, the gains of \eqref{eq:rapidfb} are
\begin{eqnarray}
g^\lambda_1(x)&=&\frac{e^{4\lambda}\,a_e(2\lambda,\sqrt x)+a_e(-2\lambda,\sqrt x)}{e^{4\lambda}-1},\label{eq:gain1}\\
g^\lambda_2(x)&=&\sqrt x\;\;\frac{e^{4\lambda}\,a_o(2\lambda,\sqrt x)-a_o(-2\lambda,\sqrt x)}{e^{4\lambda}-1}:\label{eq:gain2}
\end{eqnarray}
the velocity gain is the even part of one scalar kernel, the force gain its odd part. There are not four independent gain kernels, and no kernel equation has to be solved to evaluate them, since \eqref{eq:expl} determines $a$ by quadrature.

\section{Simulation}\label{sec:sim}

The chain is released from rest at a high mode, $u(x,0)$ built from $J_0(j_{0,6}\sqrt x)$ with a little of the next one and with the $x^{-1/4}$ growth of the envelope removed, so that the excursions are comparable along the length and the zeros crowd toward the tip, as in Figure \ref{fig:chain}. A rigid offset is added, so that the initial shape is displaced as well as curved, and a multiple of $x^8$ is added to make the initial slope at the trolley agree with the force that \eqref{eq:rapidfb} calls for at $t=0$; without that agreement the mismatch travels to the free end, is amplified there by the focusing of Remark \ref{rem:focusing}, and returns after one round trip.

Figure \ref{fig:snapshots} shows the response without control and under \eqref{eq:rapidfb} at $\lambda=1$ and $c_0=1$. The uncontrolled chain conserves its energy and keeps swinging about the offset; the controlled one is on the vertical $u=0$ within two transit times. The offset is the point of the run: with the spring removed from the target, that is with $\alpha(2,t)=\beta(2,t)$ in place of \eqref{eq:tgtbc}, the same simulation straightens the chain but leaves $\max_x|u|=0.61$ standing for all time, while \eqref{eq:rapidfb} brings it to $3\times10^{-4}$.

The kernels are those of \eqref{eq:Kseries}, with the coefficients generated by the recursion \eqref{eq:Rn}--\eqref{eq:Rndiag} and the series truncated at order twelve; the gains \eqref{eq:gain1}--\eqref{eq:gain2} of the distributed part are shown in Figure \ref{fig:gains}, and the force itself is Figure \ref{fig:force}; it oscillates at the frequency of the mode being cancelled and is spent within one round trip, after which it decays with the state. The kernels were verified against Proposition \ref{prop:explicit}, the transformation carrying $\Psi_s$ to $\Psi_{s+\lambda}$ to nine digits.

\begin{figure}[t]
\centering
\includegraphics[width=0.95\textwidth]{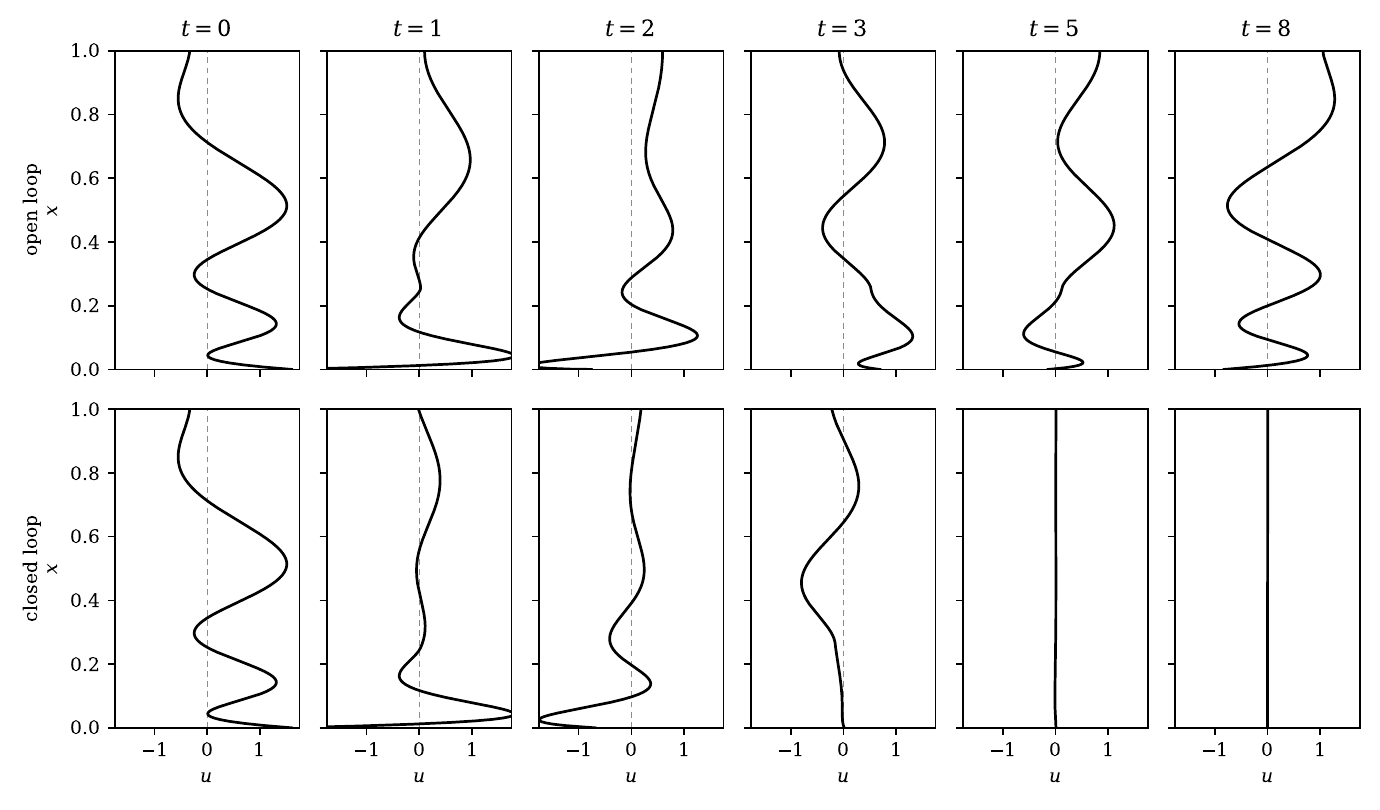}
\caption{The chain released from rest at a high mode displaced by a rigid offset: without control above, and under \eqref{eq:rapidfb} at $\lambda=1$, $c_0=1$ below. Height $x$ is measured from the free end, so the trolley is at the top of each panel and the tip at the bottom, and the vertical $u=0$ is dashed.}
\label{fig:snapshots}
\end{figure}

\begin{figure}[t]
\centering
\includegraphics[width=0.6\textwidth]{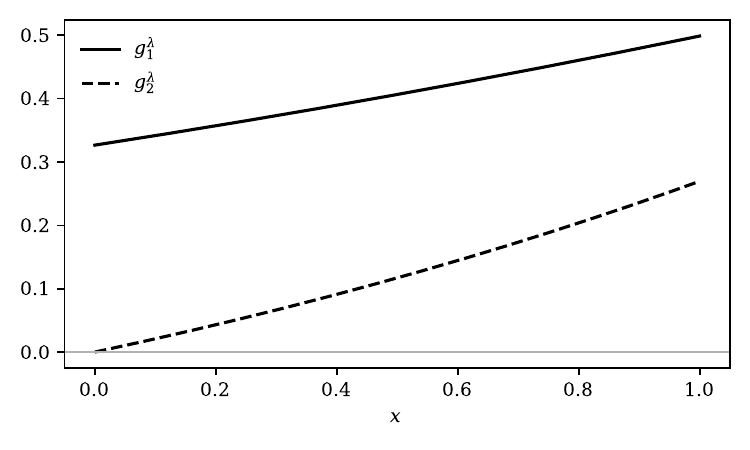}
\caption{The gains $g^\lambda_1$ and $g^\lambda_2$ of the distributed part of \eqref{eq:rapidfb} at $\lambda=1$. The strain gain vanishes at the free end, where the tension does.}
\label{fig:gains}
\end{figure}

\begin{figure}[t]
\centering
\includegraphics[width=0.6\textwidth]{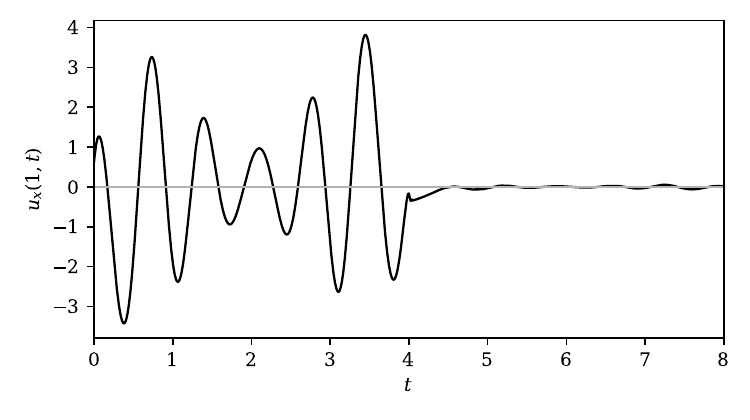}
\caption{The force at the trolley in the controlled run of Figure \ref{fig:snapshots}. The axis ends at $t=8$; beyond it the force continues to decay at the rate $\lambda$, its magnitude passing below $10^{-3}$.}
\label{fig:force}
\end{figure}

\section{Conclusions}

The chain is stabilized at any rate the designer names, by a feedback that is collocated in its local part and distributed in the rest, and the closed loop is the chain itself with damping along its length. The transformation that produces it is a Volterra map with an exponential factor, bounded with a bounded inverse in the energy of the chain, and it is the scaling of characteristics that damps an ordinary string, carried across the Abel transmutation.

Whether more is available was examined, and the answer is that it is not, along the three routes that suggest themselves. First, the singular coupling cannot be removed: no Volterra transformation of the second kind whose kernels are integrable against the behaviour at the free end carries the chain into a pair of uncoupled transport equations, and none carries it into a cascade in which one state is autonomous and drives the other. This is why the target of Section~\ref{sec:rapid} keeps the coupling rather than eliminating it, and it also settles the minimal time from the design side: an uncoupled target would bring the chain to rest in the transit time $2$, which the lower bound of \cite{PetitRouchon} forbids. Second, the minimal time $4$ is attainable, by transmuting the chain into an ordinary string and matching the impedance there, and the resulting closed loop is nilpotent. Third, that feedback is not admissible on the energy of the chain: its closed loop generates no bounded semigroup there, and the space on which it is nilpotent lies one half derivative above. We expect the same of every feedback that extinguishes the state in a finite time, of whatever length, since the attenuation between the free end and the trolley is one of frequency and waiting does not undo it.

There is a further reason not to pursue the finite time, and it is one of engineering rather than of function spaces. A feedback that extinguishes the state in a finite time must cancel the reflection at the actuated end, and for systems with strong reflection such cancellation destroys the delay margin altogether: the closed loop is unstable under an arbitrarily small delay in the loop, and the remedy is to leave part of the reflection in place and give up finite-time convergence \cite{Auriol}. The target of Section~\ref{sec:rapid} leaves all of it in place. What is renounced by keeping the reflection is the settling time; what is kept is a closed loop that is still a chain.

These results are established but not reported here. They answer questions rather than open them, and each requires an apparatus of its own; carrying them would double the length of the paper and divide its subject. What they leave is the statement this paper makes: for a plant whose degeneracy places it outside the reach of the existing designs, the reachable target is the chain with damping added, and it is reached.

\appendix

\section{Proof of Lemma \ref{lem:skew}}\label{sec:skewproof}

\begin{lemma}[Target generator]\label{lem:skew}
Let $V$ be the completion of $C^\infty[0,1]$ in $\|\phi\|_V^2=\int_0^1x\,\phi_x^2\,dx+c_0\phi^2(1)$, let $\mathcal H=V\times L^2(0,1)$ with $\|(\phi,\dot\phi)\|_{\mathcal H}^2=\|\phi\|_V^2+\|\dot\phi\|^2$, and let
\begin{equation}\label{eq:genA}
\mathcal A\binom{\phi}{\dot\phi}=\binom{\dot\phi}{-L\phi},\qquad D(\mathcal A)=D(L)\times V,
\end{equation}
where $L$ is the operator associated with the form $a(\phi,\chi)=\int_0^1x\,\phi_x\chi_x\,dx+c_0\phi(1)\chi(1)$. Then $\mathcal A$ is skew-adjoint and generates a unitary group on $\mathcal H$, and the solutions of \eqref{eq:phispring} conserve $\|\sqrt x\,\varphi_x\|^2+\|\varphi_t\|^2+c_0\varphi^2(1)$.
\end{lemma}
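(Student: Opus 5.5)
The plan follows the standard route for second-order evolution equations governed by a closed symmetric form: show that the form $a$ is a coercive, closed, symmetric form on $V$ with $V\hookrightarrow L^2(0,1)$ densely. Then $L$ is self-adjoint and positive, $\mathcal H=V\times L^2$ is the energy space for $L$, and $\mathcal A$ is skew-adjoint there. Stone's theorem then gives the unitary group, and conservation is unitarity read off in the norm of $\mathcal H$.

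First, $V$ is a genuine space of functions continuously embedded in $L^2(0,1)$. For smooth $\phi$ the weighted Poincar\'e bound \eqref{eq:poincare} gives $\|\phi\|^2\le 2\phi^2(1)+2\|\sqrt x\,\phi_x\|^2\le C_{c_0}\|\phi\|_V^2$. So a Cauchy sequence in $V$ is Cauchy in $L^2$, and in $H^1(\varepsilon,1)$ for each $\varepsilon>0$ because $x\ge\varepsilon$ there. The completion is therefore identified with $\{\phi\in L^2:\ \phi\in H^1_{\rm loc}(0,1],\ \sqrt x\,\phi_x\in L^2\}$, with the trace $\phi(1)$ continuous. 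This map is injective because a limit with $\int x\phi_x^2=0$ and $\phi(1)=0$ vanishes, as noted after \eqref{eq:normc0}. Density of $V$ in $L^2$ holds because $C^\infty[0,1]\subset V$. The form $a$ equals the inner product of $V$, so it is closed, symmetric and coercive ($a(\phi,\phi)=\|\phi\|_V^2$). By the representation theorem for forms (Kato/Lions), $L$ is self-adjoint and positive definite on $L^2$, with $D(L^{1/2})=V$ and $\|L^{1/2}\phi\|=\|\phi\|_V$.

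Next I would identify $D(L)$ with the boundary problem \eqref{eq:phispring}. Take $\phi\in D(L)$, so that $a(\phi,\chi)=(L\phi,\chi)$ for all $\chi\in V$. Testing with $\chi\in C_c^\infty(0,1)$ gives $L\phi=-(x\phi_x)_x$ in distributions, hence $x\phi_x$ is absolutely continuous on $[0,1]$. Testing with $\chi$ supported near $x=1$, and integrating by parts, gives $\phi_x(1)=-c_0\phi(1)$. Testing with $\chi\equiv1$ near $0$ then forces $\lim_{x\to0^+}x\phi_x=0$: the boundary term at $0$ must vanish, and $x\phi_x$ has a limit there because $(x\phi_x)_x\in L^2$. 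This is the step I expect to need care, because the free end carries no boundary condition in $V$ and the tension condition must emerge as a natural condition. The argument needs only that constants near $0$ belong to $V$, which they do.

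Finally, $\mathcal A$ as in \eqref{eq:genA} is the standard block operator. For $U=(\phi,\dot\phi)$ and $\tilde U=(\tilde\phi,\dot{\tilde\phi})$ in $D(\mathcal A)$,
\[
(\mathcal AU,\tilde U)_{\mathcal H}=a(\dot\phi,\tilde\phi)-(L\phi,\dot{\tilde\phi})=a(\dot\phi,\tilde\phi)-a(\phi,\dot{\tilde\phi}),
\]
which is antisymmetric, so $\mathcal A$ is skew-symmetric. Skew-adjointness follows from surjectivity of $I\pm\mathcal A$. Solving $(I\pm\mathcal A)U=(f,g)$ reduces to $(I+L)\phi=\mp g+f$ with $f\in V$, $g\in L^2$ (up to sign conventions). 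This is uniquely solvable with $\phi\in D(L)$ because $I+L\ge I$ is self-adjoint, and then $\dot\phi=\pm(\phi-f)\in V$. Stone's theorem gives a unitary group, and for $U(0)\in D(\mathcal A)$ the function $\frac{d}{dt}\|U\|_{\mathcal H}^2=2\,{\rm Re}(\mathcal AU,U)=0$. Since $\|U\|_{\mathcal H}^2=\|\sqrt x\,\varphi_x\|^2+\|\varphi_t\|^2+c_0\varphi^2(1)$, this is the conserved quantity, and it extends to all mild solutions by density.
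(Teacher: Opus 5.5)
Your proposal is correct and follows the paper's route: $a$ is the inner product of $V$, the representation theorem gives a positive self-adjoint $L$, the tip and spring conditions are read off from $D(L)$, and Stone's theorem gives the unitary group and the conserved energy. The differences are only of detail.
\begin{itemize}
\item You get $\lim_{x\to0^+}x\phi_x=0$ as a natural boundary condition, by testing with a $\chi\in C^\infty[0,1]$ equal to $1$ near the tip. The paper instead notes that a nonzero limit $\ell$ would give $x\phi_x^2\sim\ell^2/x$, contradicting $\phi\in V$.
\item You treat general $\phi\in D(L)$ through the distributional derivative, where the paper restricts to $\phi$ smooth on $(0,1]$.
\item You check skew-adjointness through skew-symmetry and the range condition for $I\pm\mathcal A$, which the paper simply asserts. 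In that check, $\dot\phi=\pm(\phi-f)$ should read $\mp(\phi-f)$.
\end{itemize}
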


\emph{Proof of Lemma \ref{lem:skew}.} The form $a$ is symmetric, non-negative and closed on $V$, and $V$ embeds continuously in $L^2(0,1)$ by \eqref{eq:poincare}. By the representation theorem for closed forms it is associated with a positive self-adjoint operator $L$ on $L^2(0,1)$, $a(\phi,\chi)=\langle L\phi,\chi\rangle$. For $\phi\in D(L)$ smooth on $(0,1]$, integrating the form by parts gives
\begin{equation}\label{eq:formIBP}
a(\phi,\chi)=-\int_0^1\bigl(x\phi_x\bigr)_x\chi\,dx+\bigl[x\phi_x\chi\bigr]_0^1+c_0\phi(1)\chi(1),
\end{equation}
so $L\phi=-(x\phi_x)_x$. The tip condition is not assumed but follows: $(x\phi_x)_x\in L^2(0,1)$ puts $x\phi_x$ in $H^1(0,1)$, so it has a limit $\ell$ at $0$, and $\ell\ne0$ would give $\phi_x\sim\ell/x$ and $\int_0x\,\phi_x^2\,dx=\infty$, against $\phi\in V$; hence $\lim_{x\to0^+}x\phi_x=0$, which is \eqref{eq:tip}. With the tip term gone, \eqref{eq:formIBP} leaves $\bigl[\phi_x(1)+c_0\phi(1)\bigr]\chi(1)$, and $\chi(1)$ is arbitrary, so $\phi_x(1)=-c_0\phi(1)$: the spring. Because $L$ is positive and self-adjoint, $\mathcal A$ in \eqref{eq:genA} satisfies $\mathcal A^*=-\mathcal A$, and Stone's theorem gives the unitary group. Conservation of $\|\cdot\|_{\mathcal H}^2$, which is the stated quantity, follows. \hfill$\square$

\section*{Acknowledgments}

The author thanks Rafael Vazquez for contributing an important step from \eqref{eq:xietax} to \eqref{eq:sys}, the appropriate $2\times2$ representation as a starting point for backstepping design.

The author's problems, ideas, and results were developed with the assistance of Claude and ChatGPT in final theorem formulation, proofs, and drafting throughout the paper, under the author's correction and complete verification.

This work was funded by AFOSR grant FA9550-23-1-0535 and NSF grant ECCS-2151525.

\end{document}